\documentclass[english,a4paper]{article}
\usepackage{dcolumn}
\usepackage{bm}
\usepackage{graphicx}
\usepackage{amsmath}
\usepackage{amsfonts}
\usepackage{amssymb}
\usepackage{amsthm}
\usepackage{amstext}
\usepackage{amsbsy}
\usepackage{amsopn}
\usepackage{amscd}
\usepackage{amsxtra}

\newtheorem{theorem}{Theorem}
\newtheorem{lemma}{Lemma}

\newtheorem{proposition}{Proposition}

\newcommand{\red}[1]{{\color{red} #1}}

\newtheorem{definition}{Definition}

\def\openone{\leavevmode\hbox{\small1\kern-3.3pt\normalsize1}}

\usepackage[usenames,dvipsnames]{color}

\begin{document}
\title{Complete positivity, positivity and long-time asymptotic behavior in a two-level open quantum system}
\author{G. Th\'eret\footnote{Laboratoire Interdisciplinaire Carnot de Bourgogne (ICB), UMR 6303 CNRS-Universit\'e Bourgogne-Franche Comt\'e, 9 Av. A.
Savary, BP 47 870, F-21078 Dijon Cedex, France}, D. Sugny\footnote{Laboratoire Interdisciplinaire Carnot de
Bourgogne (ICB), UMR 6303 CNRS-Universit\'e Bourgogne-Franche Comt\'e, 9 Av. A.
Savary, BP 47 870, F-21078 Dijon Cedex, France, dominique.sugny@u-bourgogne.fr}}

\maketitle

\begin{abstract}
We study the concepts of complete positivity and positivity in a two-level open quantum system whose dynamics are governed by a time-local quantum master equation. We establish necessary and sufficient conditions on the time-dependent relaxation rates to ensure complete positivity and positivity of the dynamical map. We discuss their relations with the non-Markovian behavior of the open system. We also analyze the long-time asymptotic behavior of the dynamics as a function of the rates. We show under which conditions on the rates the system tends to an equilibrium state. Different examples illustrate this general study.
\end{abstract}



\section{Introduction}
A perfect isolation of a quantum system from its environment is not possible in realistic physical
processes. The interaction with the environment is generally detrimental and leads to a loss of
information and quantum correlations~\cite{breuerbook,weissbook,alickibook,vega2017}. In some cases, this property can be captured by modeling empirically the
dynamics of the system by a time-local master equation in Redfield form. This differential equation is characterized by different relaxation rates and frequency transitions which may be time-dependent. In this paper, we are not interested in the derivation of such functions and we assume, on the basis of experimental data and knowledge of the dynamical systems, that these functions exist and are sufficiently smooth~\cite{relaxrates2,wonderen1999,laine2012,megier2020}. The dynamical evolution of this open system must satisfy specific properties such as complete positivity (CP) and positivity (P) to ensure that the state remains physically valid at any time. Complete positivity is due to the assumption of factorized initial conditions between the system and the bath~\cite{carteret2008,pechukas1994,alicki1995}, while positivity is required for defining the density operator of the reduced system. CP is also necessary to preserve positivity of the dynamics when the system is entangled with other quantum degrees of freedom. For constant relaxation rates, the impact of these dynamical constraints has been studied extensively both for two-level and larger quantum systems. The relaxation parameters must fulfill different inequalities which can be established~\cite{schirmer2004} by putting the dynamical system in the Gorrini-Kossakovski-Lindblad-Sudarshan (GKLS) form~\cite{lindblad1,lindblad2}. In this case, the process represents a semigroup and a sufficient and necessary condition for CP is that all the GKLS diagonal decay rates $\gamma_i$ are positive. This property can be extended to time-dependent rates when $\gamma_i(t)\geq 0$ giving a sufficient condition of CP. However, in many situations, this description is too restrictive and memory effects due to the non-Markovian behavior of the dynamics have to be taken into account~\cite{wolfprl,RMPbreuer,reviewrivas}. The non-Markovianity (NM) is characterized by the negativity of, at least, one of the coefficients $\gamma_i$ during a given time interval. Note that a large number of quantitative measures has been proposed recently to detect this property on the basis of specific experimental data~\cite{tracedistance,tracedistance2,tracedistanceSB,relaxrates1,measurevol,localnonlocal,rivasprl,mesure_expnaturephys,measureother1,measureother2,measureother3,wibmann2012,measureother4,measureother5,reviewPR,nonmarkoviansaalfrank}. For NM phenomenological master equations, CP and P may be violated, the dynamical map loosing then its physical meaning. In this case, there is no straightforward way to verify CP and P when the relaxation rates are known. In this paper, we propose to study this general problem in the case of a two-level quantum system. We establish necessary and sufficient conditions on the decay rates to ensure CP and P of the dynamics. The positivity of the Choi matrix~\cite{relaxrates1,choimatrix} is used for CP, while a direct computation is performed for the positivity property. Studies on the CP of time-local quantum master equations have been performed in~\cite{maniscalco2007,hall2008}, but for specific NM dynamics.  Works have also been carried out for other representations of the dynamical equation~\cite{budini2004,vacchini2016,breuer2008,vacchini2013,chruscinski2016} such as, e.g., time nonlocal integro-differential equations. In this paper, we also find families of systems where CP and P are equivalent and we exhibit examples for which the dynamical map is only positive. We discuss the link between CP and NM and we introduce a simple family of dynamical systems, the quasi-Markovian ones which are by construction CP, but can be Markovian or NM. Standard examples illustrate this general study. As a byproduct, we finally investigate the long-time asymptotic behavior of the dynamics with respect to the decay rates. We establish under which conditions on the rates the system tends towards the equilibrium state. These different results are interesting from an experimental point of view because they give physical constraints to respect in the construction of an empirical quantum master equation.

The paper is organized as follows. The model system is presented in Sec.~\ref{sec2} with its dynamics in Redfield and GKLS forms. The CP and P of the dynamical map are studied respectively in Sec.~\ref{sec3} and Sec.~\ref{sec4}. Necessary and sufficient conditions on the relaxation rates are established to ensure that such properties are satisfied. Quasi-Markovian systems are introduced in Sec.~\ref{sec5}. The long-time asymptotic behavior of the dynamical system is described in Sec.~\ref{secnew}. Different examples are investigated in Sec.~\ref{sec6}. Conclusion and prospective views are given in Sec.~\ref{sec7}. Proofs about the positivity and the quasi-Markovian behavior of the system are respectively described in Appendices~\ref{appb} and \ref{appC}. Additional results about the asymptotic behavior of the dynamics are shown in Appendix~\ref{secasympnew}.

\section{The model system}\label{sec2}
We consider a two-level open quantum system whose state is described at time $t$ by a density operator $\rho(t)$, i.e. a positive semidefinite Hermitian operator of unit trace. We denote by $\mathcal{H}$ the two-dimensional Hilbert space of the system, spanned by the canonical basis $\{|1\rangle,|2\rangle\}$, and by $\mathcal{S}(\mathcal{H})$ the set of density operators. The quantum dynamical linear map $\Phi_t$ from $\mathcal{S}(\mathcal{H})$ to $\mathcal{S}(\mathcal{H})$ maps by definition the initial state $\rho(0)$ to the state $\rho(t)$ as $\rho(t)=\Phi_t[\rho(0)]$, with $\Phi_0=I$, the unit map at time 0~\cite{RMPbreuer}. The map $\Phi_t$ needs not only to be positive to ensure that $\rho(t)$ is a well-defined density operator, but also completely positive in the assumption of factorized initial conditions between the system and the bath. We recall that a positive map is a map which transforms positive operators into positive operators. The map $\Phi_t$ also preserves hermiticity and the trace of operators, so that it maps a density operator of $\mathcal{S}(\mathcal{H})$ to another density operator of $\mathcal{S}(\mathcal{H})$. The property of CP is defined in the tensor product space $\mathcal{H}\otimes \mathbb{C}^n$ where $n$ is a non-zero positive integer. We consider the map $\Phi_t\otimes I_n$ which operates on this space, $I_n$ being the identity operator on $\mathbb{C}^n$. $\Phi_t$ is said to be CP if $\Phi_t\otimes I_n$ is positive for all $n$. A positive map $\Phi_t$ corresponds to the case $n=1$. It is then obvious that CP implies P. A characterization of CP in terms of Choi matrix is used in Sec.~\ref{sec3}.

The existence of the inverse of $\Phi_t$, $\Phi_t^{-1}$, at all times $t\geq 0$ allows to write the dynamical evolution of the system as a time-local quantum master equation~\cite{relaxrates1}. This assumption is the starting point of our study. The invertibility of $\Phi_t$ allows to define a dynamical map $\Phi_{t,s}$ from times $s$ to $t$, $t\geq s\geq 0$ as $\Phi_{t,s}=\Phi_t\Phi_s^{-1}$. Note that the map $\Phi_{t,s}$ needs not to be CP or P. The dynamical map is said to be CP-divisible (resp. P-divisible) if $\Phi_{t,s}$ is CP (resp. P) for all $t$ and $s$. It can then be shown that $CP$- divisible maps are also Markovian~\cite{RMPbreuer,measureother5,reviewPR}. In this study, we emphasize that we investigate under which conditions $\Phi_t$ is CP or P, which is different from CP or P-divisibility.

In the case of a two-level quantum system, denoting by $(\rho_{11},\rho_{12},\rho_{21},\rho_{22})$ the matrix elements of $\rho$ in the canonical basis, the time evolution can be written in Redfield form~\cite{schirmer2004} as
\begin{equation}\label{eqdyntwo}
\frac{d}{dt}
\begin{pmatrix}
\rho_{11} \\
\rho_{12} \\
\rho_{21} \\
\rho_{22}
\end{pmatrix}
=
\begin{pmatrix}
-\gamma_{21} & 0 & 0 & \gamma_{12}\\
0 & i\omega-\Gamma & 0 & 0\\
0 & 0 & -i\omega-\Gamma & 0\\
\gamma_{21} & 0 & 0 & -\gamma_{12}
\end{pmatrix}
\begin{pmatrix}
\rho_{11} \\
\rho_{12} \\
\rho_{21} \\
\rho_{22}
\end{pmatrix}
\end{equation}
where $\omega$ is the frequency transition of the two-level system, $\gamma_{12}$ and $\gamma_{21}$ are respectively the relaxation rates of the populations from level 2 to 1 and from 1 to 2. The parameter $\Gamma$ describes the dephasing of the coherences. Units such that $\hbar=1$ are used throughout the paper. By construction, note that the trace of the density operator is equal to 1 at any time $t$ if $\textrm{Tr}[\rho(0)]=1$. The frequency transition and the different decay rates are assumed to be sufficiently smooth functions of time. The differential system~\eqref{eqdyntwo} can be put in the GKLS-like form~\cite{lindblad1,lindblad2,relaxrates1} as
\begin{eqnarray}\label{eqlindblad}
& & \frac{d}{dt}\rho(t) = -i[H,\rho(t)] \\
& & +\sum_{j=1}^3 \gamma_j(t) \left( L_j \rho(t) L_j^\dagger - \frac{1}{2}\left\{ L_j^\dagger L_j,\rho(t)\right\}\right).\nonumber
\end{eqnarray}
with $\gamma_1 = \gamma_{12}$, $\gamma_2 = \gamma_{21}$, $\gamma_3 = \Gamma - \frac{\gamma_{+}}{2}$, $\gamma_+=\gamma_{21}+\gamma_{12}$, $\gamma_-=\gamma_{12}-\gamma_{21}$,
$$
L_1 =
\begin{pmatrix}
	0 	& 	1		\\
    	0  	& 	0 		
\end{pmatrix},~
L_2 =
\begin{pmatrix}
	0 	& 	0		\\
    	1  	& 	0 		
\end{pmatrix},~
L_3 = \frac{1}{\sqrt{2}}
\begin{pmatrix}
	1 	& 	0		\\
    	0  	& 	-1 		
\end{pmatrix},
$$
and
$$
H=
\frac{\omega}{2}
\begin{pmatrix}
	-1 	& 	0		\\
    	0  	& 	1 		
\end{pmatrix}
$$
which governs the unitary part of the dynamics. Equation~\eqref{eqlindblad} is said to be in GKLS form when the different parameters and operators do no depend on time. The process is then described by a semigroup and a sufficient and necessary condition for CP is that all the rates $\gamma_i$ are positive. For two-level quantum systems, standard constraints are thus $\Gamma\geq \frac{\gamma_+}{2}$ and $\gamma_{12}\geq 0$, $\gamma_{21}\geq 0$, i.e.
\begin{equation}\label{eqmark}
\textrm{(M)} : 2\Gamma\geq\gamma_+\geq 0,~ -\gamma_+\leq\gamma_-\leq+\gamma_+
\end{equation}
In the general case, the rates $\gamma_i(t)$ may depend on time. A necessary and sufficient condition of CP-divisiblity for time-dependent rates is given by $\gamma_i(t)\geq 0$. The corresponding quantum processes are Markovian and CP, but they do not capture all the possible physical dynamics such as non-Markovian behaviors for which at least one of the rates $\gamma_i$ is negative during some time interval. In this case, a general characterization of CP or P has not yet been established and this is one of the objectives of this study to formulate such conditions for two-level open quantum systems. More precisely, the goal is to find constraints on the time-dependent decay rates generalizing the ones for the GKLS equation such that CP or P is verified.

We first recall results on the time evolution of the system~\eqref{eqdyntwo}, which can be integrated exactly in the coherence vector coordinates $(x,y,z)$~\cite{coherencevector}, defined for a two-level as $x=2\Re[\rho_{21}]$, $y=2\Im[\rho_{21}]$ and $z=\rho_{11}-\rho_{22}$. We denote by $(x_0,y_0,z_0)$ the initial values of the coherence vector coordinates at $t=0$. The dynamical system can then be expressed as
$$
\frac{d}{dt}\begin{pmatrix}
x \\
y \\
z
\end{pmatrix}
=\begin{pmatrix}
-\Gamma & \omega & 0\\
-\omega & -\Gamma & 0\\
0 & 0 & -\gamma_+
\end{pmatrix}
\begin{pmatrix}
x \\
y \\
z
\end{pmatrix}
+
\begin{pmatrix}
0 \\
0 \\
\gamma_-
\end{pmatrix}
$$
Introducing the coefficients $\tilde{\Gamma}=\int_0^t\Gamma(\tau)d\tau$, $\tilde{\gamma}_+=\int_0^t\gamma_+(\tau)d\tau$ and $\tilde{\omega}=\int_0^t\omega(\tau)d\tau$, straightforward computations lead to
$$
\begin{cases}
x(t)=e^{-\tilde{\Gamma}}(x_0\cos(\tilde{\omega})+y_0\sin(\tilde{\omega})) \\
y(t)=e^{-\tilde{\Gamma}}(-x_0\sin(\tilde{\omega})+y_0\cos(\tilde{\omega})) \\
z(t)=s(t)+z_0e^{-\tilde{\gamma}_+}
\end{cases}
$$
where $s$ is the particular solution of the differential equation satisfied by $z$, with $s(0)=0$. This solution can be expressed explicitly as a function of the decay rates as
$$
s(t)=e^{-\tilde{\gamma}_+(t)}\int_{0}^t[e^{\tilde{\gamma}_+(\tau)}\gamma_-(\tau)d\tau]
$$
Finally, we deduce that the dynamical map $\Phi_t$ can be written as
$$
\begin{pmatrix}
\rho_{11}(t) \\
\rho_{12}(t) \\
\rho_{21}(t) \\
\rho_{22}(t) \\
\end{pmatrix}
=
\begin{pmatrix}
\frac{1}{2}(1-s)+\frac{e^{-\tilde{\gamma}_+}}{2} & 0 & 0 & \frac{1}{2}(1-s)-\frac{e^{-\tilde{\gamma}_+}}{2}\\
0 & e^{-\tilde{\Gamma}-i\tilde{\omega}} & 0 & 0 \\
0 & 0 & e^{-\tilde{\Gamma}+i\tilde{\omega}} & 0\\
\frac{1}{2}(1+s)-\frac{e^{-\tilde{\gamma}_+}}{2} & 0 & 0 & \frac{1}{2}(1+s)+\frac{e^{-\tilde{\gamma}_+}}{2}  \\
\end{pmatrix}
\begin{pmatrix}
\rho_{11}(0) \\
\rho_{12}(0) \\
\rho_{21}(0) \\
\rho_{22}(0) \\
\end{pmatrix}
$$
We denote by $\Phi_{ij}(t)$ the matrix elements of the map $\Phi_t$.

\section{Complete positivity of the dynamical map}\label{sec3}
Before establishing the conditions for CP of the quantum dynamics, standard mathematical results are briefly described.

We consider the map $\varphi$ from $M_n(\mathbb{C})$ to itself, where $M_n(\mathbb{C})$ is the set of $n\times n$-matrices with entries in $\mathbb{C}$. $\varphi$ is said to be positive if
$$
a\geq 0\  \Rightarrow \ \varphi(a)\geq 0
$$
with $a\in M_n(\mathbb{C})$. We recall that $a\geq 0$, i.e. $a$ is a positive semi-definite matrix, if for every complex vector $z$ we have $z^\dagger a z\geq 0$. Note that this condition implies that $a$ is a Hermitian matrix and that its eigenvalues are non-negative. A natural extension of the map $\varphi$ is $I_m\otimes \phi$ from $M_m(\mathbb{C})\otimes M_n(\mathbb{C}) \to M_m(\mathbb{C})\otimes M_n(\mathbb{C})$. $M_m(\mathbb{C})\otimes M_n(\mathbb{C})$ is identified as $m\times m$- matrices with entries in $M_n(\mathbb{C})$ and the map $I_m\otimes \phi$ is defined as
$$
\begin{pmatrix}
a_{11} & \cdots & a_{1m} \\
\vdots & \vdots & \vdots \\
a_{m1} & \cdots & a_{mm}
\end{pmatrix}
\mapsto
\begin{pmatrix}
\varphi(a_{11}) & \cdots & \varphi(a_{1m}) \\
\vdots & \vdots & \vdots \\
\varphi(a_{m1}) & \cdots & \varphi(a_{mm})
\end{pmatrix}
$$
where $a_{ij}\in M_n(\mathbb{C})$. By definition, $\varphi$ is CP if $I_m\otimes \phi$ is positive for all $m$. This property can be written explicitly as follows
$$
\sum_{i,j}^me_{ij}\otimes a_{ij}\geq 0\  \Rightarrow \ \sum_{i,j}^me_{ij}\otimes \varphi(a_{ij})\geq 0
$$
with $e_{ij}$ the matrix with one in the i-th row and j-th column and 0 elsewhere. We then introduce the Choi matrix $C_\varphi$ of $\varphi$ as
$$
C_\varphi=\sum_{i,j}^n e_{ij}\otimes \varphi(e_{ij})
$$
It can be shown that $\varphi$ is CP if and only if $C_\varphi$ is positive~\cite{choimatrix}.

In the quantum setting, the Choi matrix of $\Phi_t$ is given by~\cite{relaxrates1}
$$
C_{\Phi_t}=\sum_{i,j}|i\rangle\langle j|\otimes \Phi_t(|i\rangle\langle j|)
$$
Note that the Choi matrix can also be constructed from the maximally entangled state between two Hilbert spaces.

Using $\Phi_t(|i\rangle\langle j|)=\sum_{k,\ell}|k\rangle\langle k|\Phi_t(|i\rangle\langle j|)|\ell\rangle \langle\ell |$, we deduce that
$$
C_{\Phi_t}=\sum_{i,j}\sum_{k,\ell}|i,k\rangle\langle k|\Phi_t(|i\rangle\langle j|)|\ell\rangle\langle j,\ell|
$$
For a two-level quantum system, the matrix elements of the Choi matrix can be written as
$$
C_{\Phi_t}^{\alpha\beta} = \langle\alpha_1|\Phi_t[|\alpha_2\rangle\langle\beta_2|]|\beta_1\rangle,
$$
with $(\alpha_1,\alpha_2,\beta_1,\beta_2)\in \{1,2\}$. We introduce the operators $e_\alpha = |\alpha_1\rangle\langle\alpha_2|$ and $e_\beta = |\beta_1\rangle\langle\beta_2|$ and the basis
$e_1 = |1\rangle\langle 1|$, $e_2 = |1\rangle\langle 2|$, $e_3 = |2\rangle\langle 1|$ and $e_4 = |2\rangle\langle 2|$. In this basis, we then get
$$
C_{\Phi_t} =
\begin{pmatrix}
	\Phi_{11} 	&	0	&	0  &	\Phi_{22}	\\
    0	 	& 	\Phi_{14} 	& 	0	&	0\\
	0 		& 	0	& 	\Phi_{41}  	& 	0	\\		
	\Phi_{33}  	&	0	&	0  & \Phi_{44}		
\end{pmatrix}
$$
The expected properties of the Choi matrix can be verified, i.e. $C_{\Phi_t}$ is Hermitian and $\textrm{Tr}[C_{\Phi_t}]=2$~\cite{relaxrates1}. The characteristic polynomial of the matrix is $p(X) = (\Phi_{14} - X)(\Phi_{41} - X)((\Phi_{11} - X)(\Phi_{44} - X) - \Phi_{22}\Phi_{33}) = (\Phi_{14} - X)(\Phi_{41} - X)(X^2 - (\Phi_{11} + \Phi_{44})X + \Phi_{11}\Phi_{44} - \Phi_{22}\Phi_{33})$. We have $\Phi_{11} + \Phi_{44} = 1 + e^{-\tilde{\gamma}_+}$, $\Phi_{11}\Phi_{44} = \frac{1}{4}((1 + e^{-\tilde{\gamma}_+})^2 - s^2)$ and $\Phi_{22}\Phi_{33} = e^{-2\tilde{\Gamma}}$. The eigenvalues of $C_{\Phi_t}$ are $\Phi_{14}$, $\Phi_{41}$ and $\frac{1}{2}\left(\Phi_{11} + \Phi_{44} \pm \sqrt{(\Phi_{11} - \Phi_{44})^2 + 4\Phi_{22}\Phi_{33}}\right)$. The Choi matrix is positive if all the eigenvalues are positive. We then deduce that the necessary and sufficient conditions of CP can be expressed for all times $t\geq 0$ as
\begin{eqnarray}
\label{eq1} & & -(1 - e^{-\tilde{\gamma}_+}) \leq s \leq 1 - e^{-\tilde{\gamma}_+}\\
\label{eq2} & & s^2 \leq (1 + e^{-\tilde{\gamma}_+})^2 - 4e^{-2\tilde{\Gamma}}
\end{eqnarray}
Note that Eq.~\eqref{eq1} implies that $\tilde{\gamma}_+(t)\geq 0$ for $t\geq0$, while Eq.~\eqref{eq2} leads to $\tilde{\Gamma}(t)\geq0$. Indeed, we have $0\leq (1 + e^{-\tilde{\gamma}_+})^2 - 4e^{-2\tilde{\Gamma}}$, i.e.
$e^{-2\tilde{\Gamma}}\leq \frac{1}{4}(1 + e^{-\tilde{\gamma}_+})^2\leq 1$.
The CP conditions are then equivalent to
\begin{eqnarray}
\label{eq1new} & & \tilde{\gamma}_+(t)\geq 0,~ \tilde{\Gamma}(t)\geq0,~\forall t\geq0\nonumber \\
\label{eq2new} \textrm{(CP)}:~& & s^2 \leq (1 - e^{-\tilde{\gamma}_+})^2~ \textrm{if}\ \tilde{\gamma}_+ \leq 2\tilde{\Gamma}\\
\label{eq3new} & & s^2 \leq (1 + e^{-\tilde{\gamma}_+})^2 - 4e^{-2\tilde{\Gamma}}~\textrm{if}\ \tilde{\gamma}_+ \geq 2\tilde{\Gamma}
\end{eqnarray}
Indeed, it is straightforward to see that if $\tilde{\gamma}_+(t) \leq 2\tilde{\Gamma}(t)$ then (\ref{eq2new}) $\Rightarrow$ (\ref{eq3new}) at time $t$, whereas (\ref{eq3new}) $\Rightarrow$ (\ref{eq2new}) at time $t$ in the case $\tilde{\gamma}_+(t) \geq 2\tilde{\Gamma}(t)$. Contrary to the criteria of Markovianity and non-Markovianity, we also observe that these constraints depend on the time integral of the relaxation rates.
\section{Positivity of the dynamical map}\label{sec4}
We study in this section under which conditions $\Phi_t$ is a positive map. By definition, we already know that CP is a stronger condition than P. It is nevertheless instructive to prove this result in the case studied. We then establish necessary and sufficient conditions for P that can be directly compared to those of CP. Using such conditions, we describe a family of dynamical maps for which CP and P are equivalent and we exhibit examples where the map is only positive. We recall that a positive map transforms a positive operator into a positive operator. Starting from a density operator $\rho_0$ at time 0, the question is thus to find under which conditions the state at time $t$ is also a well-defined density operator. In the coherence vector coordinates, this corresponds to the condition $x(t)^2+y(t)^2+z(t)^2\leq 1$ at any time $t$, knowing that $x_0(t)^2+y_0(t)^2+z_0(t)^2\leq 1$.
\begin{theorem}\label{th1}
A CP map $\Phi_t$ is positive.
\end{theorem}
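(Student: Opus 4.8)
We already observed in Sec.~\ref{sec2} that CP implies P, since a positive map is simply the $n=1$ instance of complete positivity; here I would instead verify the implication concretely from the explicit CP conditions \eqref{eq1}--\eqref{eq2}, which makes visible how the CP region is contained in the P region.

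Since $\Phi_t$ preserves hermiticity and trace, it is positive if and only if it maps the closed Bloch ball $\{(x_0,y_0,z_0): x_0^2+y_0^2+z_0^2\le1\}$ into itself. From the explicit solution one has $x(t)^2+y(t)^2=e^{-2\tilde{\Gamma}}(x_0^2+y_0^2)$ and $z(t)=s+z_0e^{-\tilde{\gamma}_+}$. Since $x_0^2+y_0^2\le 1-z_0^2$ and, under the CP conditions, $\tilde{\Gamma}\ge0$ so that $e^{-2\tilde{\Gamma}}\le1$, it suffices to prove
\[
e^{-2\tilde{\Gamma}}(1-z_0^2)+\bigl(s+z_0e^{-\tilde{\gamma}_+}\bigr)^2\le1 \quad\text{for all } z_0\in[-1,1].
\]
This is a one-variable quadratic inequality $q(z_0)\le0$, with leading coefficient $e^{-2\tilde{\gamma}_+}-e^{-2\tilde{\Gamma}}$ and endpoint values $q(\pm1)=(s\pm e^{-\tilde{\gamma}_+})^2-1$; since $\tilde{\gamma}_+\ge0$, the pair of conditions $q(\pm1)\le0$ is exactly equivalent to \eqref{eq1}.

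It then remains to control the interior of $[-1,1]$. If $\tilde{\gamma}_+\le\tilde{\Gamma}$, $q$ is convex, so its maximum over $[-1,1]$ is attained at an endpoint and \eqref{eq1} concludes. If $\tilde{\gamma}_+>\tilde{\Gamma}$, $q$ is concave and its maximum over $[-1,1]$ may sit at the interior critical point $z_0^{*}=s\,e^{-\tilde{\gamma}_+}/(e^{-2\tilde{\Gamma}}-e^{-2\tilde{\gamma}_+})$: when $z_0^{*}\notin[-1,1]$ one is back to the endpoints, while when $z_0^{*}\in[-1,1]$ one evaluates $q(z_0^{*})$ in closed form and checks $q(z_0^{*})\le0$ using \eqref{eq1} together with \eqref{eq2} (distinguishing $\tilde{\gamma}_+\le2\tilde{\Gamma}$ from $\tilde{\gamma}_+\ge2\tilde{\Gamma}$ to identify which of \eqref{eq2new}, \eqref{eq3new} is the operative bound). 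I expect this concave sub-case to be the only genuine obstacle, the rest being routine. Equivalently, and perhaps more transparently, the image of the Bloch sphere under $\Phi_t$ is the ellipsoid centred at $(0,0,s)$ with semi-axes $(e^{-\tilde{\Gamma}},e^{-\tilde{\Gamma}},e^{-\tilde{\gamma}_+})$, and the conditions \eqref{eq1}--\eqref{eq2} are precisely what guarantees that this ellipsoid lies inside the unit ball, yielding positivity directly.
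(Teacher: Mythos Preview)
Your framework is sound but you have left precisely the nontrivial part undone, and your final sentence is actually wrong.

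First the gap. You reduce positivity to the quadratic inequality $q(z_0)\le0$ on $[-1,1]$ and dispatch the convex case $\tilde{\gamma}_+\le\tilde{\Gamma}$ via the endpoints. In the concave case with interior critical point you write ``one evaluates $q(z_0^{*})$ in closed form and checks $q(z_0^{*})\le0$ \dots\ I expect this concave sub-case to be the only genuine obstacle.'' It is, and you have not treated it. A computation gives
\[
q(z_0^{*})\le0 \iff s^2\le (1-e^{-2\tilde{\Gamma}})(1-e^{-2(\tilde{\gamma}_+-\tilde{\Gamma})}),
\]
which is exactly the positivity condition~\eqref{cn3}. One must then check that the CP bound~\eqref{eq3new} implies this, i.e.\ that $(1+e^{-\tilde{\gamma}_+})^2-4e^{-2\tilde{\Gamma}}\le (1-e^{-2\tilde{\Gamma}})(1-e^{-2(\tilde{\gamma}_+-\tilde{\Gamma})})$ whenever $\tilde{\gamma}_+\ge2\tilde{\Gamma}$; this is an honest inequality that needs to be established and is not automatic from what you have written. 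Until that is done the proof is incomplete. (In the intermediate regime $\tilde{\Gamma}<\tilde{\gamma}_+\le2\tilde{\Gamma}$ one also needs to argue that either $z_0^{*}\notin[-1,1]$ or that the interior value is controlled; the CP bound $(1-e^{-\tilde{\gamma}_+})^2$ alone does \emph{not} majorize the right-hand side above in this range.)

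Second, the closing geometric remark is incorrect. You write that conditions \eqref{eq1}--\eqref{eq2} ``are precisely what guarantees that this ellipsoid lies inside the unit ball.'' They are not: the ellipsoid-in-ball condition is exactly positivity, i.e.\ \eqref{cn2}--\eqref{cn3}, and the paper exhibits maps (e.g.\ $\tilde{\Gamma}=0$, $\tilde{\gamma}_+>0$, $\gamma_-=0$) that are P but not CP. The CP conditions are strictly stronger, so ``precisely'' should be ``sufficient for.''

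For comparison, the paper's own proof of Theorem~\ref{th1} avoids the quadratic/critical-point analysis altogether. It splits only at $\tilde{\gamma}_+\lessgtr2\tilde{\Gamma}$, and in each case replaces $e^{-2\tilde{\Gamma}}$ (resp.\ $e^{-\tilde{\gamma}_+}$) by the larger of the two, then bounds $z(t)$ directly using the corresponding bound on $s$, and verifies by a short explicit computation that the resulting expression is $\le1$. Your quadratic route is closer in spirit to Appendix~\ref{appb}, where the P conditions themselves are derived; it is a perfectly legitimate alternative, but only once the concave sub-case is actually carried out.
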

\begin{proof}
We have
\begin{eqnarray*}
x(t)^2 + y(t)^2 + z(t)^2 &=& e^{-2\tilde{\Gamma}(t)}(x_0^2 + y_0^2) + z(t)^2\\
&\leq& e^{-2\tilde{\Gamma}(t)}(1 - z_0^2) + z(t)^2.
\end{eqnarray*}
The equality is reached when $x_0^2 + y_0^2 + z_0^2 = 1$. Thus, the system is positive if and only if $e^{-2\tilde{\Gamma}}(1 - z_0^2) + z^2 \leq 1$. If $\tilde{\gamma}_+ \leq 2\tilde{\Gamma}$ then we have
$$
e^{-2\tilde{\Gamma}}(1 - z_0^2) + z^2 \leq e^{-\tilde{\gamma}_+}(1 - z_0^2) + z^2.
$$
Using Eq.~(\ref{eq2new}), we get $-(1 - e^{-\tilde{\gamma}_+}) \leq s(t) \leq 1 - e^{-\tilde{\gamma}_+}$,
which leads to
$$
-(1 - (1 + z_0)e^{-\tilde{\gamma}_+}) \leq z(t) \leq 1 - (1 - z_0)e^{-\tilde{\gamma}_+}.
$$
Since
\begin{eqnarray*}
& & \left(1 - (1 - z_0)e^{-\tilde{\gamma}_+(t)}\right)^2 - \left(1 - (1 + z_0)e^{-\tilde{\gamma}_+(t)}\right)^2 \\
& & = 4e^{-\tilde{\gamma}_+(t)}(1 - e^{-\tilde{\gamma}_+(t)})z_0,
\end{eqnarray*}
we deduce that the maximum of $z^2$ is $\left(1 - (1 - z_0)e^{-\tilde{\gamma}_+(t)}\right)^2$ if $z_0 \geq 0$ and $ \left(1 - (1 + z_0)e^{-\tilde{\gamma}_+(t)}\right)^2$ if $z_0 \leq 0$. Let us assume that $z_0 \geq 0$, the case $z_0\leq 0$ can be done along the same lines. We have
\begin{eqnarray*}
e^{-2\tilde{\Gamma}}(1 - z_0^2) + z^2 &\leq& e^{-\tilde{\gamma}_+}(1 - z_0^2) + z^2\\
&\leq& e^{-\tilde{\gamma}_+}(1 - z_0^2) + \left(1 - (1 - z_0)e^{-\tilde{\gamma}_+}\right)^2\\
&\leq& - e^{-\tilde{\gamma}_+}(1 -  e^{-\tilde{\gamma}_+})\left(z_0 - 1\right)^2 + 1\\
&\leq&  1,
\end{eqnarray*}
leading to $x(t)^2 + y(t)^2 + z(t)^2\leq 1$ for any $t\geq0$.

Consider now that $\tilde{\gamma}_+ \geq 2\tilde{\Gamma}$. From Eq.~(\ref{eq3new}), we have
$$
s^2 \leq (1 + e^{-\check{\gamma}_+})^2 - 4e^{-2\check{\Gamma}} \leq (1 + e^{-2\check{\Gamma}})^2 - 4e^{-2\check{\Gamma}} = (1 - e^{-2\check{\Gamma}})^2.
$$
Since $\tilde{\Gamma}\geq 0$, we get
$$
-(1 - e^{-2\tilde{\Gamma}}) \leq s(t) \leq 1 - e^{-2\tilde{\Gamma}}.
$$
The proof is thus the same as in the first case, replacing $2\tilde{\Gamma}$ by $\tilde{\gamma}_+$.\qed
\end{proof}
As for CP, necessary and sufficient conditions for P can be established. The dynamical map $\Phi_t$ is P if and only if
\begin{eqnarray}
\label{cn1} & & \tilde{\gamma}_+(t) \geq 0, \tilde{\Gamma}(t)\geq 0,~\forall t\geq0 \nonumber \\
\label{cn2} \textrm{(P)}: & & s^2 \leq (1 - e^{-\tilde{\gamma}_+})^2~ \textrm{if}\ \tilde{\gamma}_+ \leq 2\tilde{\Gamma}\\
\label{cn3} & & s^2 \leq (1 - e^{-2\tilde{\Gamma}})(1 - e^{-2(\tilde{\gamma}_+ - \tilde{\Gamma})}) ~\textrm{if}\ \tilde{\gamma}_+ \geq 2\tilde{\Gamma}
\end{eqnarray}
These conditions are proved in Appendix~\ref{appb}. Since the criteria~\eqref{eq2new} and~\eqref{cn2} are the same, it is straightforward to show that if $0 \leq \tilde{\gamma}_+ \leq 2\tilde{\Gamma}$, then CP is equivalent to P. A family of maps for which the two properties are not equivalent can be found using the conditions~\eqref{eq3new} and~\eqref{cn3}.
For instance, if $\tilde{\Gamma}(t) = 0, \tilde{\gamma}_+(t) > 0$ and $\gamma_-(t) = 0$, then the dynamic is P but not CP.
Indeed, since $\gamma_- = 0$ we have $s = 0$ and the condition~\eqref{cn3} is satisfied. On the contrary, it is not CP because the condition~\eqref{eq3new}, $s^2 \leq (1 + e^{-\tilde{\gamma}_+})^2 - 4e^{-2\tilde{\Gamma}}$ writes $0 \leq (1 + e^{-\tilde{\gamma}_+})^2 - 4$ and is not satisfied.
\section{Quasi-Markovian systems}\label{sec5}
As already mentioned, a Markovian system for which the decay rates $\gamma_i(t)$ of the GKLS equation are positive for any time $t$ is CP. An interesting issue is to find similar conditions for non-Markovian systems since any NM dynamic is not CP. A first answer has been given in Sec.~\ref{sec3} with some explicit conditions for CP in a two-level open quantum system. However, this solution is not completely satisfactory in the sense that such conditions are not easy to check quickly or to interpret physically. We propose in this paragraph a new family of dynamical systems, called the quasi-Markovian systems, which is larger than the Markovian one and for which the dynamical map is CP. The conditions of QM are directly inspired from those of a Markovian system, except that a time-local condition is replaced by a time-integral one. A two-level quantum system is said to be \emph{quasi-Markovian} (QM) if the following inequalities are verified for all $t\geq0$
$$
\textrm{(QM)}:~-|\gamma_+| \leq \gamma_- \leq +|\gamma_+|~\textrm{and}~\tilde{\Gamma} \geq \frac{1}{2}\tilde{\gamma}_+\geq0.
$$
Using Eq.~\eqref{eqmark}, we obtain that the Markovian systems are QM. It can then be shown that the dynamical map of a QM system is CP.
Indeed, as shown in Appendix~\ref{appC}, a QM system fulfills
\begin{equation}\label{eqQM}
-(1 - e^{-\tilde{\gamma}_+}) \leq s(t) \leq 1 - e^{-\tilde{\gamma}_+}.
\end{equation}
We deduce that $s^2\leq (1 - e^{-\tilde{\gamma}_+})^2 = (1 + e^{-\tilde{\gamma}_+})^2 - 4e^{-\tilde{\gamma}_+}$ and then from Eq.~\eqref{eq2new} we obtain the CP.

This result is interesting because it allows to find easily CP maps which are also NM. It is the case for instance if the function $\Gamma(t)$ takes negative values while satisfying  $\tilde{\Gamma}\geq\tilde{\gamma}_+/2$ for all $t\geq0$. Different examples will be given in Sec.~\ref{sec6}.
The different properties of the dynamical map are summarized in Fig.~\ref{fig0}.
\begin{figure}[tb]
  \centering
  \includegraphics[width=1\linewidth]{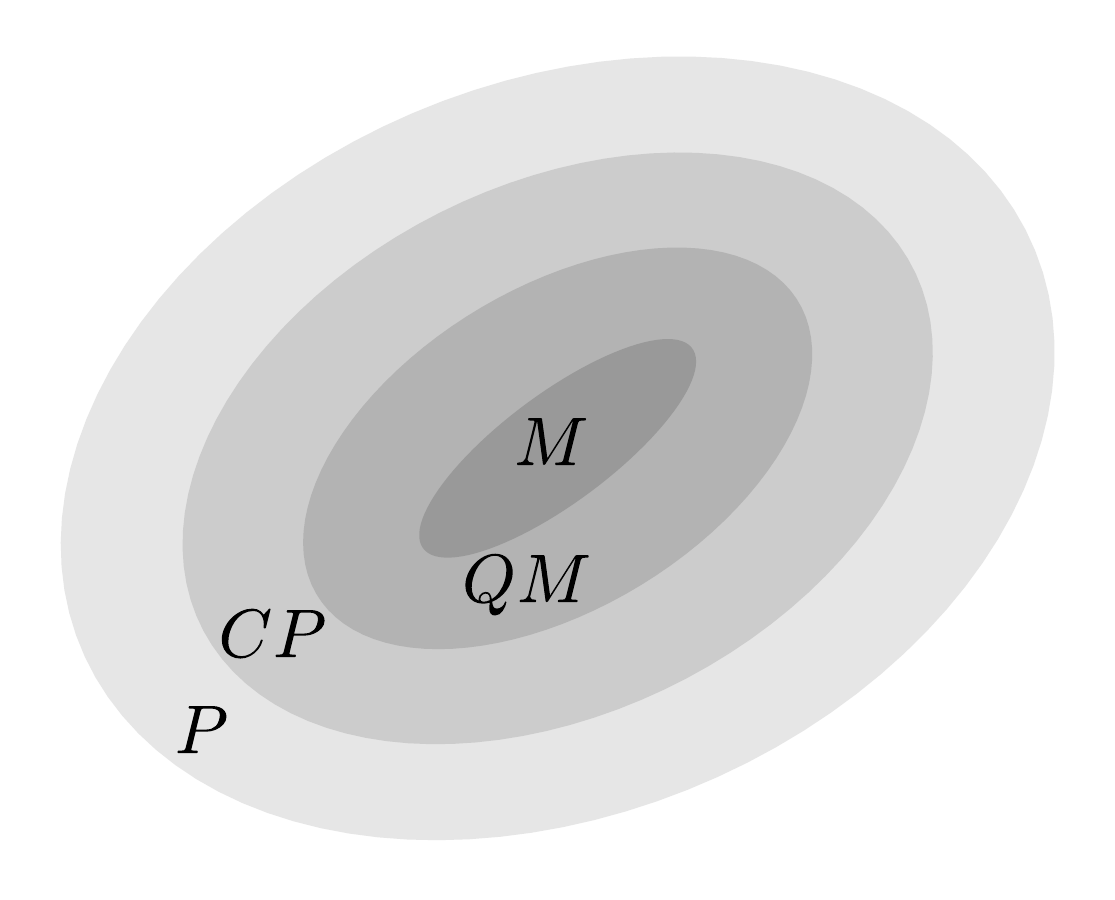}
   \caption{Schematic representation of the different characteristics of the dynamical map for which $M\Rightarrow QM \Rightarrow CP \Rightarrow P$. The different gray areas correspond to larger and larger sets of dynamical maps ranging from Markovian maps to positive ones.}
  \label{fig0}
\end{figure}
\section{Long-time asymptotic behavior}\label{secnew}
We have characterized in the preceding sections the system dynamics. Another key point is to describe the long-time asymptotic behavior of the density matrix. In particular, the goal is to establish under which conditions on the decay rates the system tends to an equilibrium state. This problem is quite simple in the Markovian regime with constant coefficients, but it is much more complex for time-dependent rates and non-Markovian dynamics. We introduce below a class of functions which ensure the existence of this asymptotic state and we describe in which cases this state can be explicitly found. This analysis is interesting in the design of a quantum master equation since it helps to select the right family of time-dependent relaxation rates to consider.

Different results can be established for the differential equation (E): $\dot{z} = Az + B$ where $A,B$ are two time-dependent functions converging to $A_0,B_0$, respectively, when $t\to +\infty$. The different proofs are given in Appendix~\ref{secasympnew}.
\begin{proposition}\label{prop4}
If $A_0\neq 0$ then any bounded solution $z$ of (E) converges to $-\frac{B_0}{A_0}$.
\end{proposition}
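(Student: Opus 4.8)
The plan is to reduce to the case of a vanishing limit by an affine change of variable and then run a Lyapunov-type argument on the square of the solution, splitting according to the sign of $A_0$. First I would set $w(t)=z(t)+\frac{B_0}{A_0}$, which is licit since $A_0\neq 0$. Writing $Az+B=Aw+\left(B-\frac{A}{A_0}B_0\right)$ and observing that $h(t):=B(t)-\frac{A(t)}{A_0}B_0\to B_0-B_0=0$ as $t\to+\infty$, the equation (E) becomes $\dot w=Aw+h$ with $h(t)\to 0$, and $w$ is bounded exactly when $z$ is. It therefore suffices to show that every bounded solution of $\dot w=Aw+h$ converges to $0$. To this end I introduce $V:=w^2\geq 0$, which is bounded, and compute $\dot V=2Aw^2+2wh$, yielding the two-sided estimate $2AV-2\sqrt V\,|h|\leq \dot V\leq 2AV+2\sqrt V\,|h|$.

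In the case $A_0<0$, I fix $\varepsilon>0$ and pick $T$ such that $A(t)\leq A_0/2=:-c<0$ and $|h(t)|\leq\varepsilon$ for all $t\geq T$. Then whenever $V\geq (2\varepsilon/c)^2$ one has $2\sqrt V\,\varepsilon\leq cV$, hence $\dot V\leq -2cV+cV=-cV<0$; a short bookkeeping argument (once $V$ drops below $(2\varepsilon/c)^2$ it cannot return above it, since crossing back up would require $\dot V>0$ there; and if $V$ never drops below, then $\dot V\leq -cV$ forces $V(t)\leq V(T)e^{-c(t-T)}\to 0$, a contradiction) gives $\limsup_{t\to+\infty}V(t)\leq(2\varepsilon/c)^2$. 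Letting $\varepsilon\to 0$ yields $V\to 0$, i.e. $w\to 0$. In the case $A_0>0$, I fix $\varepsilon>0$ and pick $T$ with $A(t)\geq A_0/2=:c>0$ and $|h(t)|\leq\varepsilon$ for $t\geq T$. Now whenever $V\geq(2\varepsilon/c)^2$ one gets $\dot V\geq 2cV-cV=cV>0$, so if $V(t_1)>(2\varepsilon/c)^2$ for some $t_1\geq T$ then $V$ stays above the threshold for all $t\geq t_1$ and satisfies $V(t)\geq V(t_1)e^{c(t-t_1)}\to+\infty$, contradicting boundedness; hence $V(t)\leq(2\varepsilon/c)^2$ for all $t\geq T$, and $\varepsilon\to 0$ again gives $w\to 0$. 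In both cases $z(t)=w(t)-\frac{B_0}{A_0}\to -\frac{B_0}{A_0}$.

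The main obstacle is the case $A_0>0$: here the homogeneous equation is unstable, so one cannot invoke a standard "asymptotically autonomous stable linear ODE" result, and the hypothesis that $z$ is bounded must be used in an essential way—it is precisely what confines the solution to the equilibrium. The rest is routine Lyapunov bookkeeping; the only point requiring slight care is the $\limsup$ argument showing that $V$ cannot re-enter the region above the threshold $(2\varepsilon/c)^2$.
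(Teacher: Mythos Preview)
Your proof is correct and takes a genuinely different route from the paper's. The paper first treats the homogeneous case $B_0=0$ by a direct monotonicity argument on $z$ itself (if $z$ exceeds a small threshold it must be monotone, hence convergent; otherwise it is trapped near $0$), then invokes an auxiliary lemma (a bounded solution whose derivative converges must have derivative tending to $0$) to identify the limit, and only afterwards reduces the case $B_0\neq 0$ via the change of variables $u=z-z_0$ with $z_0$ a solution of the autonomous equation $\dot z_0=A_0 z_0+B_0$. You instead perform the simpler affine shift $w=z+B_0/A_0$ at the outset, reducing immediately to $h\to 0$, and then run a Lyapunov argument on $V=w^2$ that treats the two signs of $A_0$ symmetrically. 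Your approach is more self-contained (no separate lemma on limits of derivatives is needed), handles the general $B_0$ uniformly, and would generalize more readily to vector-valued equations; the paper's argument is slightly more elementary in that it avoids differential inequalities for $V$, but at the cost of a less tidy case analysis and the extra lemma.
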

The case $A_0 = 0$ is different. Indeed, nonconvergent bounded solutions of specific equation $(E)$ can be found. An example is given by $z(t) = \sin(\sqrt{t})$. We have $\dot{z}(t) = \frac{1}{2\sqrt{t}}\cos(\sqrt{t}) = \frac{1}{2\sqrt{t}}\sin(\sqrt{t}) + \frac{1}{\sqrt{2t}}\cos(\sqrt{t} + \frac{\pi}{4})$.
We deduce that $z$ is a solution of the differential equation $\dot{z} = Az + B$ with $A,B\to 0$, but $z$ is a nonconvergent bounded function.
The non convergent behavior is due to the slow oscillations of $z(t) = \sin(\sqrt{t})$. This justify the following definition.
\begin{definition}
A function $f$ is said to be slowly oscillating if for all $\tau>0$, $f(t+\tau) - f(t)$ goes to 0 when $t\to +\infty$.
\end{definition}
This means that such a function looks more and more like its time-shifted version as $t$ tends to infinity. It can be shown that if $z$ is a bounded solution of (E) with $A,B$ converging to 0, then $z$ is a slowly oscillating function.
Since a slowly oscillating function is not necessary convergent, additional conditions on $A$ and $B$ are required to ensure the convergence of $z$.
\begin{definition}
A function $f$ goes to $0$ not too slowly if there exists $\alpha>0$ such that $\lim_{t\to+\infty} t^{1+\alpha}f(t) = 0$ i.e. $f(t) \in o(\frac{1}{t^{1+\alpha}})$.
\end{definition}
The following result can then be proved.
\begin{proposition}\label{prop5}
Let $z$ be a bounded solution of (E) where $A,B$ are two functions going not too slowly to 0. Then $z$ is a convergent function.
\end{proposition}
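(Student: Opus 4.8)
The plan is to write the solution $z$ of (E) via the variation-of-constants formula and show that the integral defining it converges as $t\to+\infty$. Set $a(t)=\int_0^t A(\tau)\,d\tau$; since $A\in o(t^{-1-\alpha})$ the integral $a_\infty=\int_0^\infty A(\tau)\,d\tau$ converges absolutely, so $a(t)\to a_\infty$ and $e^{-a(t)}$ is bounded above and below by positive constants. Then
$$
z(t)=e^{-a(t)}\Bigl(z(0)+\int_0^t e^{a(\tau)}B(\tau)\,d\tau\Bigr).
$$
Because $e^{a(\tau)}$ is bounded and $B\in o(\tau^{-1-\alpha})$, the integrand $e^{a(\tau)}B(\tau)$ is absolutely integrable on $[0,\infty)$, hence $\int_0^t e^{a(\tau)}B(\tau)\,d\tau$ converges to a finite limit $J$ as $t\to+\infty$. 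Combining with $e^{-a(t)}\to e^{-a_\infty}$ gives $z(t)\to e^{-a_\infty}(z(0)+J)$, a finite limit. This proves convergence.

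The one subtlety I would be careful about is the regularity/integrability of $A$ and $B$: the hypothesis is stated as $A(t),B(t)\in o(t^{-1-\alpha})$, and I want to conclude $\int^\infty |A|<\infty$ and $\int^\infty |B|<\infty$. For this it is enough that $A,B$ are locally integrable on $[0,\infty)$ (true here, since the decay rates are assumed sufficiently smooth) together with the pointwise bound: $o(t^{-1-\alpha})$ means $|A(t)|\le \varepsilon(t)\,t^{-1-\alpha}$ with $\varepsilon(t)\to 0$, so $|A(t)|\le C\,t^{-1-\alpha}$ for $t$ large, and $\int^\infty t^{-1-\alpha}\,dt<\infty$ since $\alpha>0$. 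The same bound handles $B$. I would spell this estimate out explicitly, as it is the only place where the precise form ``not too slowly'' is used — a bare $A,B\to 0$ would not suffice, as the example $z(t)=\sin(\sqrt t)$ in the text already shows.

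I do not expect a serious obstacle here; the argument is essentially the classical stability estimate for a scalar linear ODE with an integrable perturbation of the (zero) coefficient and an integrable forcing term. The only thing to state clearly is that boundedness of $z$ is actually not needed for the conclusion once $A_0=0$ is replaced by the stronger ``$A$ integrable'' hypothesis — the formula above produces a convergent (hence bounded) solution automatically — but since the proposition is phrased for bounded solutions I would simply remark that every solution is in fact of the displayed form and converges. One may also note in passing that this immediately applies to the population equation $\dot z=-\gamma_+ z+\gamma_-$ when $\gamma_+,\gamma_-$ go not too slowly to $0$, complementing Proposition \ref{prop4} which covers the case $A_0=-\gamma_+(\infty)\neq0$.
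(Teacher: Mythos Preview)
Your approach is correct and genuinely different from the paper's. You solve (E) explicitly by variation of constants and then argue that each factor in the solution formula converges, using only that $\int_0^\infty|A|<\infty$ and $\int_0^\infty|B|<\infty$ (which follow from the $o(t^{-1-\alpha})$ hypothesis). The paper instead never writes down the solution: it estimates
\[
|z(y)-z(x)|\le\int_x^y|\dot z|\,dt\le\int_x^y\bigl(|A(t)|\,|z(t)|+|B(t)|\bigr)\,dt,
\]
uses the assumed boundedness of $z$ to replace $|z(t)|+1$ by a constant $M$, and obtains $|z(y)-z(x)|\le \varepsilon M/\alpha$ for all $x,y$ large enough; it then concludes by showing any two accumulation points of $z$ coincide (essentially the Cauchy criterion). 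Your route is cleaner and, as you correctly note, shows that \emph{every} solution converges, so the boundedness hypothesis is actually redundant; the paper's route is more hands-off but needs boundedness as an input.

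One minor slip to fix: for $\dot z=Az+B$ with $a(t)=\int_0^t A$, the variation-of-constants formula is
\[
z(t)=e^{a(t)}\Bigl(z(0)+\int_0^t e^{-a(\tau)}B(\tau)\,d\tau\Bigr),
\]
i.e.\ the signs in the two exponentials are swapped relative to what you wrote. Since $a(t)\to a_\infty$ is finite, both $e^{a}$ and $e^{-a}$ are bounded above and below by positive constants, so your convergence argument goes through unchanged once the formula is corrected.
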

Note that $z$ can have any real limit. Consider for instance the differential equation $\dot{z} = \frac{1}{t^2}z + \frac{1}{t^2}$.
The solution can be expressed as
$$
z(t) = K e^{-\frac{1}{t}} - 1,\qquad K\in\mathbb{R}.
$$
These functions converge and the limit is $K-1$.

Finally, we come back to the dynamical system satisfied by the coherence vector
$$
\begin{cases}
\dot{x}=-\Gamma x+\omega y \\
\dot{y}=-\Gamma y-\omega x \\
\dot{z}=\gamma_--\gamma_+z
\end{cases}
$$
We denote by $\gamma_-^0$, $\gamma_+^0$ and $\Gamma^0$ the limits of the different decay rates when $t\to +\infty$. Physically, the equilibrium state is usually defined as the coherence vector of coordinates $(0,0,\gamma_-^0/\gamma_+^0)$ when $\gamma_+^0\neq 0$. We consider a CP dynamical map for which $x(t)^2+y(t)^2+z(t)^2\leq 1$ for all $t\geq 0$. It is straightforward to see that $x$, $y$ and $z$ are bounded functions. Using~Eq.~\eqref{eq1new}, we know that $\tilde{\gamma}_+\geq 0$ and $\tilde{\Gamma}\geq 0$ and we deduce that $\gamma_+^0\geq 0$ and $\Gamma^0\geq 0$. The different results of this section can then be used.

When $\gamma_+^0> 0$ and $\Gamma^0> 0$, we have from Prop.~\ref{prop4} that the coherence vector goes to $(0,0,\gamma_-^0/\gamma_+^0)$ when $t\to +\infty$. In the case $\gamma_\pm^0=0$ or $\Gamma^0=0$, from Prop.~\ref{prop5}, different limits (if they exist) can be obtained according to the functions $\gamma_+(t),\gamma_-(t)$ and $\Gamma(t)$. The convergence is ensured if the three functions are not too slowly oscillating functions.
\section{Numerical examples}\label{sec6}
We first consider the general case of a qubit with multiple decoherence channels~\cite{RMPbreuer,measureother5,vacchini2012}. The master equation can be written as
$$
\dot{\rho}=\sum_i\frac{\gamma_i}{2}(\sigma_i\rho\sigma_i-\rho)
$$
where the $\sigma_i$ are the Pauli matrices and $i=x,y,z$. This equation can be expressed in terms of the coherence vector coordinates as follows
$$
\begin{cases}
\dot{x}=-(\gamma_y+\gamma_z)x \\
\dot{y}=-(\gamma_x+\gamma_z)y \\
\dot{z}=-(\gamma_x+\gamma_y)z
\end{cases}
$$
{\red This model system can be viewed as an empirical model describing the dynamics of a qubit in a complex environment. The rates $\gamma_i(t)$ can be associated with transverse and longitudinal rates generalizing the standard $1/T_1$ and $1/T_2$ constant rates used to describe a dissipative spin 1/2 particle in magnetic resonance~\cite{lapert2013}.} We assume that $\gamma_x=\gamma_y$ so as to be in the conditions of this study, but a similar analysis could be done for different decay rates. We introduce the coefficients $\tilde{\gamma}_+=\int_0^t[\gamma_x+\gamma_y]d\tau$ and $\tilde{\Gamma}=\int_0^t[\gamma_x+\gamma_z]d\tau$. Since $\gamma_-=0$, we deduce that $s=0$. The only condition to satisfy is thus $1+e^{-\tilde{\gamma}_+}\geq 2e^{-\tilde{\Gamma}}$.
As a specific example, we consider the case of eternal non-Markovianity~\cite{relaxrates1,megier2017} for which $\gamma_x=\gamma_y=1$ and $\gamma_z=-\tanh t$. Note that this system is NM for all times $t$ since $\gamma_z(t)<0$. We get $\gamma_+=2$, $\gamma_-=0$ and $\Gamma=1-\tanh t$ which leads to
$$
\tilde{\gamma}_+=2t,~\tilde{\Gamma}=\ln(\frac{e^t}{\cosh t})
$$
We can verify that $\tilde{\Gamma}< \frac{\tilde{\gamma}_+}{2}$ at any time $t>0$, so the system is never quasi-Markovian. The CP can be verified from our criterion. We have $s(t)=0$ and the relation \eqref{eq3new} has to be fulfilled as
\begin{eqnarray*}
(1+e^{-\tilde{\gamma}_+})^2-4e^{-2\tilde{\Gamma}}&=& 1+e^{-4t}+2e^{-2t}-4e^{-2t}\cosh^2 t\\
& \geq & (1-e^{-2t})^2
\end{eqnarray*}
Another standard example is a two-level system coupled to a lossy cavity~\cite{RMPbreuer,breuerbook}. This system corresponds, e.g. to a single two-level atom interacting with an electromagnetic field having a Lorenztian spectral density which mimics a lossy cavity. The dynamics of the coherence vector are given by
$$
\begin{cases}
\dot{x}=-\frac{\gamma}{2}x+\frac{S}{2}y \\
\dot{y}=-\frac{\gamma}{2}y-\frac{S}{2}x \\
\dot{z}=\gamma-\gamma z
\end{cases}
$$
where $\gamma$ and $S$ are two time-dependent functions. We deduce that
$$
\gamma_-=\gamma_+=\gamma,~\Gamma=\frac{\gamma}{2}
$$
This system is QM because $-|\gamma|\leq \gamma\leq |\gamma|$. We verify also the CP criterion. The particular solution is $s(t)=1-e^{-\tilde{\gamma}}$. It is then straightforward to show that
$$
s^2=(1-e^{-\tilde{\gamma}})^2,
$$
i.e. the condition \eqref{eq2new}.

As a final example, we consider a model system depending on a parameter and we study the dynamical properties as a function of this parameter. {\red Note that this example does not correspond in itself to a physical system, but it is interesting to study to highlight the transition between the different behaviors.} We consider the functions $\gamma_+ = 1$ and $\gamma_- = \alpha(1 + e^{-t})$ with $0\leq\alpha\leq1$. For the rate $\Gamma$, we choose any function such that $\tilde{\Gamma}\geq t/2$. Since $\gamma_-$ is a strictly decreasing function from $2\alpha$ to $\alpha$, the system is QM if and only if $\alpha \in [0,\frac{1}{2}]$. The system is Markovian when $\Gamma\geq0$, but non-Markovian examples can be found. For $\alpha<1$, the system is non-Markovian at short times and eternally non-Markovian if $\alpha = 1$. The differential system for the $z$- coordinate is
$$
\dot{z} = -z + \alpha(1 + e^{-t})
$$
and the solutions can be expressed as
$$
z(t) = \alpha(1 - (1 - t)e^{-t}) + Ae^{-t}.
$$
with a constant $A$. We deduce that $s(t) = \alpha(1 - (1 - t)e^{-t})$. This function is increasing from 0 to $\alpha(1 + e^{-2})$ on the interval $[0,2]$ and then decreasing to $\alpha$ on $[2,+\infty[$. We consider now the CP conditions. Only the condtion \eqref{eq2new} has to be verified. We have
$$
1 - e^{-\tilde{\gamma}_+} \pm s \geq 0 \iff 1 \pm \alpha - e^{-t}(1 \pm \alpha(1-t))\geq0.
$$
The derivative of the function $f_\pm(t) = 1 \pm \alpha - e^{-t}(1 \pm \alpha(1-t))$ has the same sign as $\mp t \pm 2 + 1/\alpha$. We have $f_\pm (0) = 0$.
The function  $f_+$ is strictly increasing on $]-\infty,2+1/\alpha]$ and strictly decreasing on $[2+1/\alpha,+\infty[$ with a limit equal to $1+\alpha$.
We deduce that $f_+ \geq 0$ on $[0,+\infty[$. The function $f_-$ is strictly decreasing on $]-\infty,2-1/\alpha]$ and strictly increasing on $[2-1/\alpha,+\infty[$ with a limit equal to $1-\alpha\geq 0$. The minimum is $1 - \alpha(1 + e^{-2+1/\alpha})$. Since $2 - 1/\alpha\geq0$ if and only if $\alpha \geq 1/2$, the minimum of $f_-$ is 0 when $\alpha \leq 1/2$ and negative equal to $1 - \alpha(1 + e^{-2+1/\alpha})$ otherwise. In conclusion, the dynamical map is CP in the QM case with a global maximum for $s$.
\begin{figure}[tb]
  \centering
  \includegraphics[width=1\linewidth]{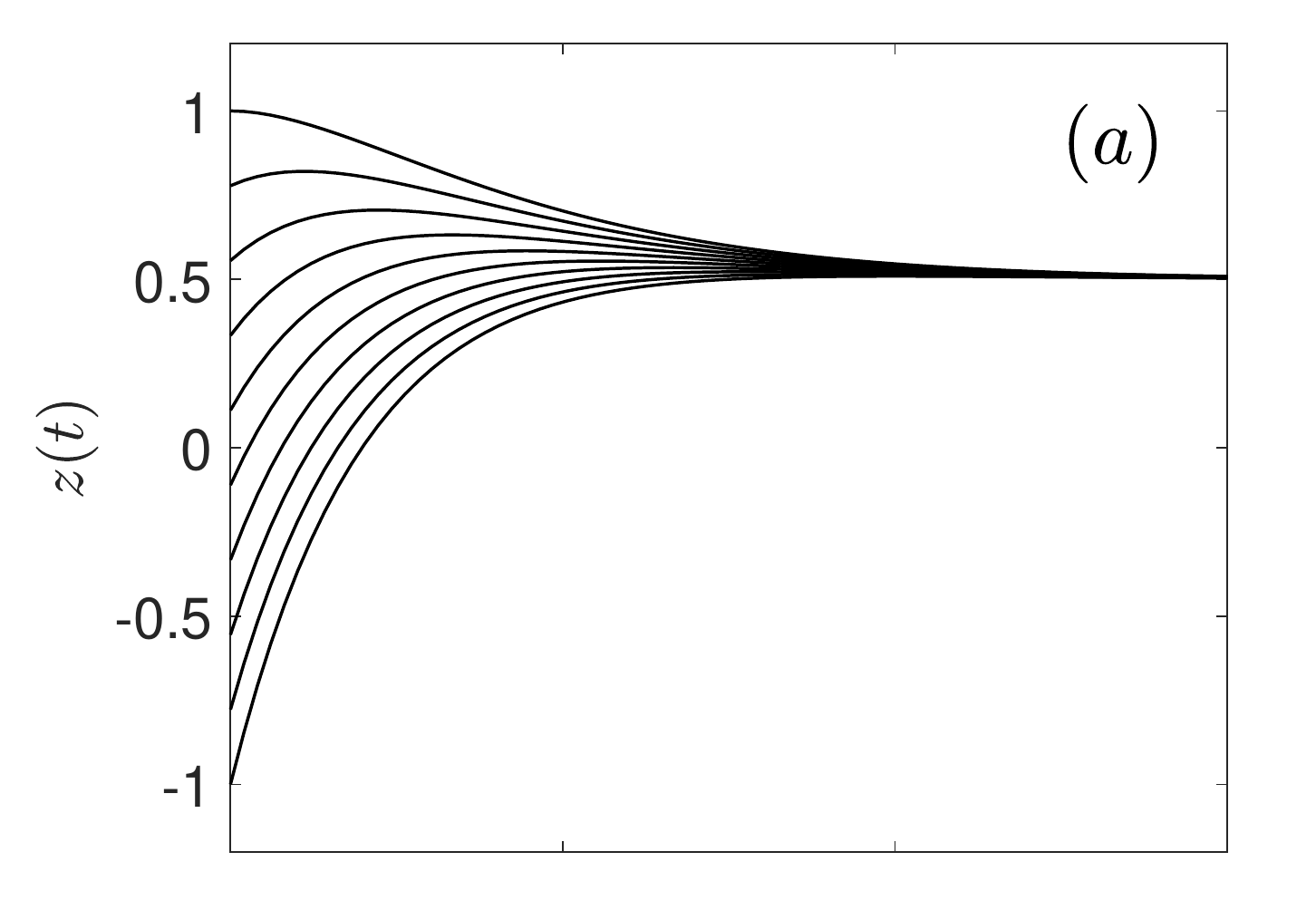}
  \includegraphics[width=1\linewidth]{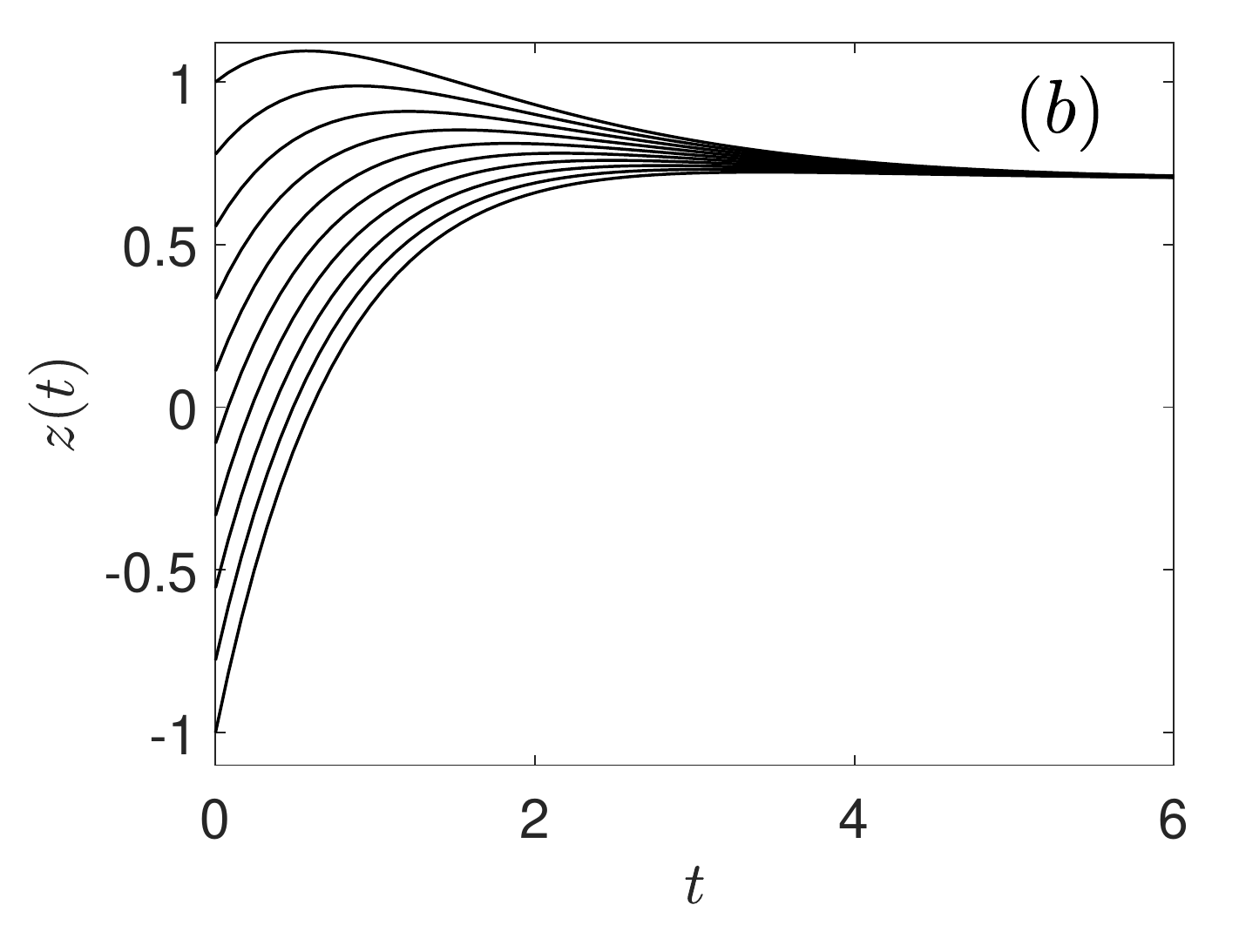}
  \caption{Plot of the trajectories $z(t)$ for a parameter $A$ going from -1 to 1 (from bottom to top). The parameter $\alpha$ is set respectively to 0.5 and 0.7 in the panels (a) and (b).}
  \label{fig1}
\end{figure}
However, for a NM system, the system is no more positive. The coherence vector goes out from the sphere of radius one for specific initial conditions. Indeed, for $A<0.8$, the coordinate $z(t)$ belongs to the interval $[-1,1]$, but this property is not verified when $A \simeq \pm1$. The different trajectories for the two behaviors are represented in Fig.~\ref{fig1}.
\section{Conclusion}\label{sec7}
We have investigated the concepts of CP and P in a two-level open quantum system whose dynamics are governed by a time-local master equation. Assuming that the decay rates can be represented by smooth time-dependent functions, we have established different criteria about the CP and the P of the dynamical map. A subtlety of such conditions is that they are nonlocal in time, in the sense that the criteria are established for the time integral of the decay rates. It is a major difference with the different measures of non-Markovianity which are local and only involve these rates at a given time. This observation partly explains the difficulty to establish simple and general conditions of CP in non-Markovian systems, while such criteria exist for Markovian ones. In the second part of this study, we simplify this question by introducing the concept of quasi-Markovianity, which corresponds to a larger class of systems than the Markovian systems, but does not include all possible non-Markovian dynamics. Interestingly, these systems are characterized by local and non-local conditions on decay rates. QM allows to design quite easily examples which are both NM and CP. The question of generalizing quasi-Markovian systems to a larger class of non-Markovian systems would be interesting to study. Finally, we show under which conditions on the relaxation rates a two-level quantum system characterized by a CP dynamical map tends asymptotically to its equilibrium state.

This study paves also the way to other promising issues. An interesting question is to study the link between the conditions obtained in this study for CP and fundamental thermodynamic principles used for deriving time-local master equation as described, e.g., in~\cite{dann1,dann2}. A next problem consists in investigating controlled open dynamics with relaxation rates which may depend on the control parameters~\cite{glaserreview,roadmap,kochreview,kochRMP}. A similar analysis could establish conditions on the decay rates allowing to preserve the CP of the dynamical map against control variations. The impact of an external control on NM has been already investigated in a series of works~\cite{lapert2013,mangaud2018,reich2015,poggi2017,mukherjee2015}.

\noindent\textbf{Acknowledgment}\\
This research has been supported by the ANR project ``QuCoBEC'' ANR-22-CE47-0008-02. We gratefully acknowledge useful discussions with R. Kosloff and R. Dann.\\

\appendix

\section{Necessary and sufficient conditions for the positivity of the dynamical map}\label{appb}
We prove in this section the necessary and sufficient conditions \eqref{cn2} and \eqref{cn3} on the positivity of the dynamical map.

We consider the initial condition $x_0 = y_0 = 0$ and $z_0 = \pm 1$. We get
\begin{equation*}
\label{CN1} 0 \leq \left(s(t) \pm e^{-\tilde{\gamma}_+(t)}\right)^2 \leq 1,
\end{equation*}
which leads to $0 \leq s^2(t) + e^{-2\tilde{\gamma}_+(t)} \leq 1$. We deduce that $e^{-2\tilde{\gamma}_+(t)} \leq 1$ and $\tilde{\gamma}_+(t) \geq 0$.

From the condition $x_0^2 + y_0^2 = 1$ and $z_0 = 0$, we arrive at $e^{-2\tilde{\Gamma}(t)} + s^2(t) \leq 1$ and $e^{-2\tilde{\Gamma}(t)} \leq 1$ i.e. $\tilde{\Gamma}(t) \geq 0$.

The next step consists in showing that the inequality $x(t)^2 + y(t)^2 + z(t)^2 \leq 1$ can be rewritten as
$$
e^{-2\tilde{\Gamma}(t)}(1 - z_0^2) + \left(s(t) + z_0e^{-\tilde{\gamma}_+(t)}\right)^2 - 1 \leq 0
$$
or
$$
z_0^2(e^{-2\tilde{\gamma}_+(t)} - e^{-2\tilde{\Gamma}(t)}) + 2z_0s(t)e^{-\tilde{\gamma}_+(t)} +s(t)^2 - 1 + e^{-2\tilde{\Gamma}(t)} \leq 0
$$
with the condition $x_0^2 + y_0^2 + z_0^2 = 1$. We introduce the function $Q(z_0) = z_0^2(e^{-2\tilde{\gamma}_+(t)} - e^{-2\tilde{\Gamma}(t)}) + 2z_0s(t)e^{-\tilde{\gamma}_+(t)} + s(t)^2 - 1 + e^{-2\tilde{\Gamma}(t)}$.

\noindent
\textbf{Case $\tilde{\gamma}_+ \leq \tilde{\Gamma}$.}
A necessary and sufficient condition for P is $Q(-1) \leq 0$ and $Q(1) \leq 0$ (the coefficient of the higher degree term being positive). We obtain
\begin{eqnarray*}
& & -1 - e^{-\tilde{\gamma}_+(t)} \leq s(t) \leq 1 - e^{-\tilde{\gamma}_+(t)}\\
& & -1 + e^{-\tilde{\gamma}_+(t)} \leq s(t) \leq 1 + e^{-\tilde{\gamma}_+(t)}.
\end{eqnarray*}
Using $\tilde{\gamma}_+(t) \geq 0$, we get $-1 + e^{-\tilde{\gamma}_+(t)} \leq s(t) \leq 1 - e^{-\tilde{\gamma}_+(t)},
$ or $s^2 \leq (1 - e^{-\tilde{\gamma}_+})^2$, i.e. the condition~\eqref{cn2} when $\tilde{\gamma}_+ \leq \tilde{\Gamma}$.\\
\noindent
\textbf{Case $\tilde{\Gamma} < \tilde{\gamma}_+$.}
The coefficient of the higher degree term of the polynomial $Q$ is negative. We denote by $\tilde{z}_0$ the coordinate giving the maximum of $Q(z_0)$.
Since
$$
\tilde{z}_0 = \frac{s(t)e^{-\tilde{\gamma}_+(t)}}{e^{-2\tilde{\Gamma}(t)} - e^{-2\tilde{\gamma}_+(t)}},
$$
we deduce that the positivity is equivalent to
\begin{eqnarray*}
& &  Q(\tilde{z}_0)\leq 0~\textrm{if}~-1 \leq \tilde{z_0} \leq +1 \\
& & Q(+1)\leq 0~\textrm{if}~\tilde{z}_0\geq+1 \\
& & Q(-1)\leq 0~\textrm{if}~\tilde{z}_0\leq -1
\end{eqnarray*}
We then consider two subcases.

Assume that $2\tilde{\Gamma} < \tilde{\gamma}_+$. When $\tilde{z}_0\geq+1$, we have $s(t) \geq e^{\tilde{\gamma}_+(t)}(e^{-2\tilde{\Gamma}(t)} - e^{-2\tilde{\gamma}_+(t)})\geq0$.
The condition $Q(+1)\leq 0$ leads to $s\leq 1 - e^{-\tilde{\gamma}_+(t)}$. We get therefore
\begin{align*}
e^{\tilde{\gamma}_+(t)}(e^{-2\tilde{\Gamma}(t)} - e^{-2\tilde{\gamma}_+(t)}) &\leq 1 - e^{-\tilde{\gamma}_+(t)}\\
e^{\tilde{\gamma}_+(t) - 2\tilde{\Gamma}(t)}  &\leq 1 \\
\tilde{\gamma}_+(t) - 2\tilde{\Gamma}(t) &\leq 0,
\end{align*}
so a contradiction. When $\tilde{z}_0\leq-1$, we have $s(t) \leq -e^{\tilde{\gamma}_+(t)}(e^{-2\tilde{\Gamma}(t)} - e^{-2\tilde{\gamma}_+(t)})\leq0$ and the condition $Q(-1)\leq 0$ gives $s\geq -1 + e^{-\check{\gamma}_+(t)}$. We deduce that
\begin{align*}
-e^{\tilde{\gamma}_+(t)}(e^{-2\tilde{\Gamma}(t)} - e^{-2\tilde{\gamma}_+(t)}) &\geq -1 + e^{-\tilde{\gamma}_+(t)}\\
e^{\tilde{\gamma}_+(t) - 2\tilde{\Gamma}(t)}  &\leq 1 \\
\tilde{\gamma}_+(t) - 2\tilde{\Gamma}(t) &\leq 0,
\end{align*}
and a contradiction. Finally, when $2\tilde{\Gamma} < \tilde{\gamma}_+$, the necessary and sufficient condition is $Q(\tilde{z}_0)\leq 0$, which gives
$$
s^2 \leq (1 - e^{-2\tilde{\Gamma}(t)})(1 - e^{-2(\tilde{\gamma}_+(t) - \tilde{\Gamma}(t))}),
$$
i.e. the condition~\eqref{cn3}.

Assume now that $\tilde{\Gamma} < \tilde{\gamma}_+ \leq 2\tilde{\Gamma}$. There is no obstruction and the necessary conditions are sufficient.
The second condition~\eqref{cn3} implies the condition~\eqref{cn2}. We have
\begin{eqnarray*}
& & (1 - e^{-2\tilde{\Gamma}})(1 - e^{-2(\tilde{\gamma}_+ - \tilde{\Gamma})}) - (1 - e^{-\tilde{\gamma}_+})^2 \\
& & \leq (1 - e^{-\tilde{\gamma}_+})(1 - e^{-2(\tilde{\gamma}_+ - \tilde{\Gamma})}) - (1 - e^{-\check{\gamma}_+})^2\\
& & \leq (1 - e^{-\tilde{\gamma}_+})(e^{-\tilde{\gamma}_+} - e^{-2(\tilde{\gamma}_+ - \tilde{\Gamma})})\\
& & \leq e^{-\tilde{\gamma}_+}(1 - e^{-\tilde{\gamma}_+})(1 - e^{2\tilde{\Gamma} - \tilde{\gamma}_+})\leq0,
\end{eqnarray*}
which gives the result.
\section{Properties of quasi-Markovian systems}\label{appC}
We show in this section different results used to prove Eq.~\eqref{eqQM}.
\begin{lemma}
\label{lem_1}
Let $\gamma_-,\gamma_+$ be two continuous functions such that for all $t\geq0$, $-|\gamma_+(t)| \leq \gamma_-(t) \leq +|\gamma_+(t)|$ and $\tilde{\gamma}_+(t)\geq 0$.
Let $z$ be a solution of $\dot{z}(t) = -\gamma_+(t)z(t) + \gamma_-(t)$ satisfying $-1 < z_0 < +1$.
Then for all $t\geq0$ we have $-1 < z(t) < +1$. 
\end{lemma}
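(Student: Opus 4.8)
The plan is to run a first–exit–time argument for the scalar linear ODE, with the integrating factor $e^{\tilde\gamma_+(t)}$ doing the work. \emph{Reduction.} It suffices to prove that $z(t)<1$ for all $t\ge 0$: the lower bound $z(t)>-1$ follows by symmetry, since $w:=-z$ solves $\dot w=-\gamma_+w+(-\gamma_-)$, and the source $-\gamma_-$ still obeys $|-\gamma_-|\le|\gamma_+|$ with the same $\tilde\gamma_+$. So assume the hypotheses, $z_0<1$, and — for contradiction — that $z$ attains the value $1$. By continuity of $z$ and $z_0<1$, the exit time $t_1:=\inf\{t>0:\ z(t)=1\}$ is well defined, strictly positive, and $z(t)<1$ on $[0,t_1)$.

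\emph{Integrating factor.} Put $u:=1-z$, so $u>0$ on $[0,t_1)$, $u(t_1)=0$, $u(0)=1-z_0>0$, and a direct computation gives $\dot u+\gamma_+u=\gamma_+-\gamma_-$. Multiplying by $e^{\tilde\gamma_+}$ and integrating over $[0,t_1]$ yields
$$
0=e^{\tilde\gamma_+(t_1)}u(t_1)=(1-z_0)+\int_0^{t_1}e^{\tilde\gamma_+(\tau)}\bigl(\gamma_+(\tau)-\gamma_-(\tau)\bigr)\,d\tau .
$$
From $\gamma_-\le|\gamma_+|$ one has $\gamma_+-\gamma_-\ge\gamma_+-|\gamma_+|$, which is nonnegative wherever $\gamma_+\ge 0$; the contributions where $\gamma_+<0$ are offset using the global identity $\int_0^{t_1}e^{\tilde\gamma_+}\gamma_+\,d\tau=e^{\tilde\gamma_+(t_1)}-1\ge 0$, where $\tilde\gamma_+(t_1)\ge 0$ is used. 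This forces the right–hand side to be strictly positive — at least $1-z_0>0$ in the base case $\gamma_+\ge 0$ — contradicting its vanishing. Hence $z(t)<1$ for all $t\ge 0$, and the claim follows after the symmetry step; alternatively, repeating the computation with $u$ replaced by $1+z$ (source term $\gamma_++\gamma_-\ge\gamma_+-|\gamma_+|$, using $-\gamma_-\le|\gamma_+|$) delivers the lower bound directly.

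\emph{Main obstacle.} The delicate point is the sign of $\gamma_+-\gamma_-$ along $[0,t_1]$: since the hypothesis supplies only $\tilde\gamma_+\ge 0$ and not pointwise $\gamma_+\ge 0$, the function $e^{\tilde\gamma_+}u$ need not be monotone, so the contradiction must be extracted from the integral identity above by balancing the possibly negative contributions of $\gamma_+-\gamma_-$ against the exact value $e^{\tilde\gamma_+(t_1)}-1$ of $\int_0^{t_1}e^{\tilde\gamma_+}\gamma_+$. Checking that $t_1$ is attained and positive is routine and immediate from continuity of $z$.
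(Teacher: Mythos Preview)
Your integrating-factor identity is correct, but the decisive inequality---that $\int_0^{t_1}e^{\tilde\gamma_+}(\gamma_+-\gamma_-)\,d\tau$ cannot drop below $-(1-z_0)$---is asserted rather than proved. From $\gamma_-\le|\gamma_+|$ you only get $\gamma_+-\gamma_-\ge\gamma_+-|\gamma_+|$, hence
\[
\int_0^{t_1}e^{\tilde\gamma_+}(\gamma_+-\gamma_-)\,d\tau \ \ge\ \bigl(e^{\tilde\gamma_+(t_1)}-1\bigr)\ -\ \int_0^{t_1}e^{\tilde\gamma_+}|\gamma_+|\,d\tau,
\]
and the subtracted term is not controlled by the hypothesis $\tilde\gamma_+\ge 0$; it can be made arbitrarily large while keeping $\tilde\gamma_+\ge 0$ everywhere. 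Your ``offsetting'' sentence and the \emph{Main obstacle} paragraph correctly locate the difficulty but do not resolve it.

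In fact the gap cannot be closed, because the lemma is false under the stated hypotheses. Take (a smoothing of) $\gamma_+=1$ on $[0,1]$, $\gamma_+=-1$ on $[1,2]$, $\gamma_+=0$ thereafter, and set $\gamma_-=|\gamma_+|$ throughout. Then $|\gamma_-|\le|\gamma_+|$ and $\tilde\gamma_+(t)\ge 0$ for all $t\ge 0$, yet with $z_0=0$ one finds $z(1)=1-e^{-1}$ and, on $[1,2]$, $\dot z=z+1$, so $z(2)=(2-e^{-1})e-1=2(e-1)\approx 3.44>1$. The paper's own proof fails at the same spot: the claimed bound $\dot z\le|\gamma_+|(1-z)$ amounts to $(|\gamma_+|-\gamma_+)\,z\le 0$, which is false whenever $\gamma_+<0$ and $z>0$. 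Both arguments (and the lemma itself) do go through under the stronger pointwise hypothesis $\gamma_+(t)\ge 0$; in that regime your integrating-factor computation gives a clean alternative to the paper's first-exit reasoning, since then $\gamma_+-\gamma_-\ge 0$ pointwise and the contradiction is immediate.
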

\begin{proof}
Assume there exists a positive time for which $z$ is equal to 1. Let $t_0$ be the smallest of these times, $z(t_0) = 1$ (the case $z(t_0) = -1$ can be done along the same lines).
Since $-1 < z_0 < +1$,  we get $t_0 > 0$. Moreover, we necessarily have $\dot{z}(t)>0$ for any $t<t_0$ close enough to $t_0$, since the function $z$ increases toward $+1$ before reaching it.
We now show that $\dot{z}(t) \leq 0$ around $t_0$. Since $-|\gamma_+(t)| \leq \gamma_-(t) \leq +|\gamma_+(t)|$, we have
$$
-|\gamma_+(t)|(1 + z(t)) \leq \dot{z}(t) \leq +|\gamma_+(t)|(1 - z(t)).
$$
It follows that if there exists $t_0$ such that $z(t_0)=1$ then $\dot{z}(t_0)\leq0$ (Likewise, if there is some $t_0$ such that $z(t_0)=-1$ then $\dot{z}(t_0)\geq0$).
As we assume that $\gamma_\pm$ is continuous, the function $z(t)$ is continuously differentiable. Hence there is a neighborhood about $t_0$ such that $\dot{z}(t)\leq0$ (or $\dot{z}(t)\geq0$ if $z(t_0)=-1$). This is a contradiction. We conclude that $z(t) < +1$ for all $t\geq0$ (and likewise $z(t) > -1$).\qed
\end{proof}
\begin{lemma}
\label{lem_2}
Let $\gamma_-,\gamma_+$ be two continuous functions such that, for all $t\geq0$, $\tilde{\gamma}_+ \geq0\ \textrm{and}\ -|\gamma_+| \leq \gamma_- \leq +|\gamma_+|$.
Let $z$ be a solution of the differential equation $\dot{z}(t) = -\gamma_+(t)z(t) + \gamma_-(t)$ satisfying $-1\leq z_0 \leq +1$.
Then for all $t\geq0$ we have
$$
- (1 - (1 + z_0)e^{-\tilde{\gamma}_+(t)}) \leq z(t) \leq 1 - (1 - z_0)e^{-\tilde{\gamma}_+(t)}.
$$
In particular,
$$
- (1 - e^{-\tilde{\gamma}_+(t)}) \leq s(t) \leq 1 - e^{-\tilde{\gamma}_+(t)}.
$$
\end{lemma}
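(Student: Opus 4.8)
The plan is to obtain both estimates from Lemma~\ref{lem_1} together with the explicit solution $z(t)=s(t)+z_0e^{-\tilde\gamma_+(t)}$ recorded in Sec.~\ref{sec2}. Since $z(t)-z_0e^{-\tilde\gamma_+(t)}=s(t)$ is independent of $z_0$, the two displayed inequalities are in fact equivalent, so it suffices to control the particular solution $s$. Lemma~\ref{lem_1} asserts precisely that for every $z_0\in(-1,+1)$ and every $t\ge0$,
\[
-1<s(t)+z_0e^{-\tilde\gamma_+(t)}<1 .
\]
First I would let $z_0$ run over $(-1,1)$ and use $e^{-\tilde\gamma_+(t)}>0$: taking the supremum of the middle term gives $s(t)+e^{-\tilde\gamma_+(t)}\le1$, and taking the infimum gives $s(t)-e^{-\tilde\gamma_+(t)}\ge-1$, i.e.
\[
-(1-e^{-\tilde\gamma_+(t)})\le s(t)\le 1-e^{-\tilde\gamma_+(t)},
\]
which is the ``in particular'' part (the standing hypothesis $\tilde\gamma_+\ge0$ enters only to ensure $0\le e^{-\tilde\gamma_+(t)}\le1$, so that this interval is nonempty). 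Adding $z_0e^{-\tilde\gamma_+(t)}$ back to this double inequality and using $z(t)=s(t)+z_0e^{-\tilde\gamma_+(t)}$ then yields at once
\[
-(1-(1+z_0)e^{-\tilde\gamma_+(t)})\le z(t)\le 1-(1-z_0)e^{-\tilde\gamma_+(t)},
\]
which is the stated conclusion.

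A self-contained variant that avoids invoking Lemma~\ref{lem_1} would mimic its first-crossing argument directly, now with the \emph{moving} barriers $b_\pm(t)=\pm1\mp(1\mp z_0)e^{-\tilde\gamma_+(t)}$. One notes $b_+(0)=b_-(0)=z_0$ and $b_-(t)\le b_+(t)$ for $t\ge0$ (because $\tilde\gamma_+\ge0$), and then checks that at any time where $z(t)=b_+(t)$ one has $\frac{d}{dt}\bigl(z-b_+\bigr)=\gamma_--\gamma_+\le0$, while at any time where $z(t)=b_-(t)$ one has $\frac{d}{dt}\bigl(z-b_-\bigr)=\gamma_-+\gamma_+\ge0$; the sign of both quantities is exactly what the hypothesis $-|\gamma_+|\le\gamma_-\le|\gamma_+|$ controls. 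A smallest-crossing-time argument then forbids $z$ from ever leaving $[b_-(t),b_+(t)]$, which is the asserted bound.

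Granting Lemma~\ref{lem_1}, the argument has no real obstacle: the only step needing a moment's care is the passage to the endpoint values $z_0=\pm1$, handled by the supremum/infimum above, and the rest --- combining the one-sided bounds on $s$ and translating between the $s$-form and the $z$-form of the inequality via $z(t)=s(t)+z_0e^{-\tilde\gamma_+(t)}$ --- is bookkeeping. If instead one wanted a proof independent of Lemma~\ref{lem_1}, the crux would migrate into the variant above, namely the sign control of $\gamma_-\mp\gamma_+$ at a first boundary-crossing time, which is where the hypotheses on $\gamma_\pm$ (and the continuity ensuring $z$ is $C^1$) are genuinely used.
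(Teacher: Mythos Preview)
Your main argument is correct and takes a different, more economical route than the paper's. The paper first restricts to $z_0\in(-1,1)$, uses Lemma~\ref{lem_1} to guarantee $1\pm z(t)>0$, and then integrates the differential inequality $\frac{d}{dt}\ln(1\pm z)\ge -|\gamma_+|$ to obtain $1\pm z(t)\ge(1\pm z_0)e^{-\tilde\gamma_+}$ directly; the endpoint cases $z_0=\pm1$ are afterwards deduced from the $s$-bound (the special case $z_0=0$ of what was just proved). You invert the logic: exploiting the affine dependence $z(t)=s(t)+z_0e^{-\tilde\gamma_+(t)}$, you extract the $s$-bound immediately from Lemma~\ref{lem_1} by letting $z_0\to\pm1$, and then recover the general bound for arbitrary $z_0\in[-1,1]$ by adding $z_0e^{-\tilde\gamma_+}$ back. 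Your route avoids the Gr\"onwall-type integration entirely and makes the equivalence between the $z$-estimate and the $s$-estimate explicit from the start; it also handles the endpoints $z_0=\pm1$ without a separate case.

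One caveat on your self-contained variant: the assertion that $\gamma_--\gamma_+\le0$ (resp.\ $\gamma_-+\gamma_+\ge0$) at a first crossing of $b_+$ (resp.\ $b_-$) does not follow from $-|\gamma_+|\le\gamma_-\le|\gamma_+|$ alone when $\gamma_+(t)<0$ at that instant; for example $\gamma_+=-1$, $\gamma_-=0$ gives $\gamma_--\gamma_+=1>0$. The paper's own differential-inequality step carries the same hidden sign issue (it tacitly replaces $\int_0^t|\gamma_+|$ by $\tilde\gamma_+$), so this is not a defect of your proof relative to theirs. Your primary argument, which uses Lemma~\ref{lem_1} only as a black box together with linearity in $z_0$, sidesteps the issue entirely.
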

\begin{proof}
Suppose that $-1 < z(0) < +1$.
Then, according to Lemma~\ref{lem_1}, for all $t \geq 0$ we have $-1 < z(t) < +1$.

From the differential equation and the conditions satisfied by $\gamma_\pm$ we get that, for all $t \geq 0$,
$$
-|\gamma_+(t)|(1 + z(t)) \leq \dot{z}(t) \leq +|\gamma_+(t)|(1 - z(t)).
$$
We obtain the following inequalities $\frac{-\dot{z}}{1 - z} \geq -|\gamma_+|$ and $\frac{\dot{z}}{1 + z} \geq -|\gamma_+|$.
Integrating from $0$ to $t$, we get
$$
\ln\!\left(\frac{1\pm z}{1 \pm z_0}\right) \geq -\int_0^t|\gamma_+(u)|du = -|\tilde{\gamma}_+|.
$$
Hence, $1\pm z \geq (1\pm z_0)e^{-\tilde{\gamma}_+}$.
This leads to the result when $z_0\neq\pm1$.

Suppose now that $z_0 = \pm1$. Then $z$ can be written as $z(t) = s(t) \pm e^{-\tilde{\gamma}_+}$ with $s(0) = 0$.
From the above property, we arrive at
$- (1 - e^{-\tilde{\gamma}_+(t)}) \leq s(t) \leq 1 - e^{-\tilde{\gamma}_+(t)}$,
which gives the result when $z_0 = \pm1$.\qed
\end{proof}
\section{Asymptotic behavior of the dynamics}\label{secasympnew}
We prove the two propositions~\ref{prop4} and \ref{prop5} of Sec.~\ref{secnew}. We first consider a preliminary result.
\begin{lemma}
\label{lem:1}
Let $z$ be a bounded solution of $(E)$. If $\dot{z}$ converges to a finite limit then this limit is 0.
\end{lemma}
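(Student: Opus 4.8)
*Let $z$ be a bounded solution of $(E)$. If $\dot{z}$ converges to a finite limit, then this limit is $0$.*

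My plan is the standard argument by contradiction against boundedness. Suppose $\dot{z}(t)\to \ell$ as $t\to+\infty$ with $\ell\neq 0$; without loss of generality assume $\ell>0$ (the case $\ell<0$ is symmetric, or one replaces $z$ by $-z$). Then there exists $T\geq 0$ such that $\dot{z}(t)\geq \ell/2$ for all $t\geq T$. Integrating this inequality from $T$ to $t$ gives
$$
z(t) = z(T) + \int_T^t \dot{z}(u)\,du \geq z(T) + \frac{\ell}{2}(t-T),
$$
so $z(t)\to +\infty$ as $t\to+\infty$. This contradicts the hypothesis that $z$ is bounded. Hence $\ell=0$.

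The only mild subtlety to state carefully is the passage from "pointwise limit of $\dot z$" to "eventual lower bound on $\dot z$": this is just the definition of the limit applied with $\varepsilon=\ell/2$, which is legitimate since $\ell>0$. No regularity beyond continuity of $\dot z$ is needed — in fact continuity is not even required, since the fundamental theorem of calculus is applied to $z$, which is $C^1$ here because $A,B$ are assumed smooth. I do not anticipate a genuine obstacle; the proof is a two-line estimate, and its role in the paper is purely as a stepping stone: combined with the forthcoming analysis it will force a convergent bounded solution of $(E)$ to have the "right" derivative behavior, feeding into Propositions~\ref{prop4} and \ref{prop5}.

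If one wanted to avoid even invoking an eventual bound, an alternative phrasing is to note that $\frac{1}{t}\int_0^t \dot z(u)\,du = \frac{z(t)-z(0)}{t}\to 0$ by boundedness of $z$, while the Cesàro-type average of $\dot z$ must converge to the same limit $\ell$ as $\dot z$ itself; equating the two forces $\ell=0$. Either route is short; I would present the direct integration argument as the cleaner one.
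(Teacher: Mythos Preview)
Your proof is correct and follows essentially the same approach as the paper: assume $\ell\neq 0$, use the definition of the limit with $\epsilon=\ell/2$ to get an eventual lower bound on $\dot z$, integrate to force $z(t)\to\pm\infty$, and contradict boundedness. The alternative Ces\`aro-type remark you add is a nice touch but is not in the paper's version.
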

\begin{proof}
We denote by $\ell$ the limit of $\dot{z}$. Assume that $\ell>0$.
For all $\epsilon>0$ small enough, there exists an interval $[M;\,+\infty[$ such that $\dot{z}(t)\geq\epsilon$.
We consider a given value of $\epsilon$ (for instance $\epsilon = \ell/2$).
By integrating over the interval $[M;\,+\infty[$, we obtain
$$
z(t) - z(M) = \int_M^t\dot{z}(u)du \geq \epsilon(t-M).
$$
We deduce that $z$ goes to $+\infty$ and therefore is not bounded. So $\ell\leq0$. In the same way, we show that $\ell<0$ implies that $z$ converges to $-\infty$.
Finally, we get $\ell = 0$.\qed
\end{proof}
\noindent We can then show Prop.~\ref{prop4}.\\
\begin{proof}
We first consider the case $A_0>0$  and $B_0=0$.
For any $\epsilon>0$, there exists $M>0$ such that $\forall t\geq M$, $|A(t) - A_0|<\epsilon$ and $|B(t)|<\epsilon$.
We choose $\epsilon<A_0/2$. Assume that there exists $t_0\geq M$ such that $z(t_0) > \frac{\epsilon}{A_0 - \epsilon} >0$, then
$$
\dot{z}(t_0) = A(t_0)z(t_0) + B(t_0) > (A_0 - \epsilon)\frac{\epsilon}{A_0 - \epsilon} - \epsilon = 0.
$$
Let $[t_0,t]$ be the maximal closed interval on which $z$ is strictly increasing. We have $\dot{z}(t) = 0$. We deduce that $z(t) > z(t_0)$ and
$$
\dot{z}(t) = A(t)z(t) + B(t) > (A_0 - \epsilon)\frac{\epsilon}{A_0 - \epsilon} - \epsilon = 0,
$$
which leads to a contradiction unless $t = +\infty$. Since $z$ is strictly increasing on $[t_0,+\infty[$ and $z$ is bounded, we deduce that $z$ converges.

Assume that there exists $t_0\geq M$ such that $z(t_0) < -\frac{\epsilon}{A_0 + \epsilon} < 0$.
In the same way, we get that $z$ is a strictly bounded decreasing function on $[t_0,+\infty[$ and therefore a convergent function.

There remains the case for which for all $\epsilon>0$, there exists $M>0$ with $\forall t\geq M$, $|A(t) - A_0|<\epsilon$ and
$$
-\frac{\epsilon}{A_0 + \epsilon} \leq z(t) \leq \frac{\epsilon}{A_0 - \epsilon}
$$
We deduce that $z$ converges to 0. A similar argument can be used for $A_0<0$.

Now if $z$ converges, we know from (E) that $\dot{z}$ converges.
According to Lemma \ref{lem:1}, $\dot{z}$ converges to zero and so does $z$, since $A_0\neq0$.

Let us now assume that $B_0\neq 0$. We consider the change of variables $u(t) = z(t) - z_0(t)$ where $z_0$ is a bounded solution of the differential equation $\dot{z}_0 = A_0 z_0 + B_0$.
We get
\begin{eqnarray*}
\dot{u}(t) &=& \dot{z}(t) - \dot{z}_0(t) = A(t)z(t) + B(t) - A_0z_0(t) - B_0 \\
&=& A(t)(u(t) + z_0(t)) + B(t) - A_0z_0(t) - B_0\\
&=& A(t)u(t) + (A(t) - A_0)z_0(t) + B(t) - B_0\\
&=& A(t)u(t) + D(t)
\end{eqnarray*}
with $A(t) \to A_0$.
Note that $z_0 = Ce^{A_0t} -\frac{B_0}{A_0}$, where $C$ is an arbitrary constant.
Hence, if $A_0<0$ then $z_0$ is bounded and $D(t) \to 0$.
We then know from the preceding $B_0 = 0$ case that $u(t)$ converges to 0.
This means that $z$ converges to $-\frac{B_0}{A_0}$.\qed
\end{proof}
Note that if $A_0>0$ bounded solutions might not exist.
The proof above shows that when $A_0>0$ and when $A(t)$ converges to $A_0$ faster than $e^{-A_0t}$ then $D(t) \to 0$, hence $u(t) \to 0$, which means that the non-zero solutions $z$ diverge like $e^{A_0t}$ (they are very similar to the solutions of $\dot{z}_0 = A_0 z_0 + B_0$).

We consider now Prop.~\ref{prop5} and we first show a preliminary result.
\begin{lemma}
Let $z$ be a bounded solution of (E) with $A,B$ converging to 0. Then $z$ is a slowly oscillating function.
\end{lemma}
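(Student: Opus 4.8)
The plan is to write the increment $z(t+\tau)-z(t)$ as an integral of $\dot z$ over the window $[t,t+\tau]$ and to bound the integrand uniformly on that window, exploiting the boundedness of $z$ together with $A,B\to 0$. The definition of slow oscillation requires showing, for each fixed $\tau>0$, that $z(t+\tau)-z(t)\to 0$ as $t\to+\infty$.

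First I would fix $\tau>0$ and let $K$ be a bound for $|z|$, which exists by hypothesis. By the fundamental theorem of calculus and equation (E),
$$
z(t+\tau)-z(t)=\int_t^{t+\tau}\dot z(u)\,du=\int_t^{t+\tau}\bigl(A(u)z(u)+B(u)\bigr)\,du .
$$
Since $|A(u)z(u)+B(u)|\le K|A(u)|+|B(u)|$ and both $A(u)$ and $B(u)$ tend to $0$, the integrand tends to $0$ as $u\to+\infty$; equivalently, $\dot z(u)\to 0$.

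Next, given $\varepsilon>0$, I would choose $M$ such that $K|A(u)|+|B(u)|<\varepsilon/\tau$ for all $u\ge M$. Then for every $t\ge M$,
$$
|z(t+\tau)-z(t)|\le\int_t^{t+\tau}\bigl(K|A(u)|+|B(u)|\bigr)\,du<\tau\cdot\frac{\varepsilon}{\tau}=\varepsilon .
$$
Hence $z(t+\tau)-z(t)\to 0$ as $t\to+\infty$ for each fixed $\tau>0$, which is exactly the assertion that $z$ is slowly oscillating.

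There is no real obstacle here: the only points requiring care are that $z$ is genuinely bounded (part of the hypothesis) and that the bound on the integrand is uniform over the moving window $[t,t+\tau]$ of \emph{fixed} length $\tau$, which is immediate from $A,B\to 0$. Alternatively one may phrase the argument as the standard fact that any (continuously differentiable) function whose derivative tends to $0$ is slowly oscillating, applied to $z$ via the observation $\dot z=Az+B\to 0$.
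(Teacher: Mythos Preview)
Your proof is correct and follows essentially the same idea as the paper: both arguments reduce to the observation that $\dot z = Az + B \to 0$ since $z$ is bounded and $A,B\to 0$, and then conclude that increments over windows of fixed length $\tau$ vanish. The only cosmetic difference is that the paper invokes the Mean Value Theorem (writing $|z(x_n+\tau)-z(x_n)| = |\dot z(c_n)|\,|\tau|$ for some $c_n$ in the window, along a sequence $x_n\to\infty$), whereas you use the Fundamental Theorem of Calculus and bound the integral directly; both are standard ways to turn $\dot z\to 0$ into slow oscillation.
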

\begin{proof}
Let $\tau\in\mathbb{R}$ and $(x_n)$ a sequence going to $+\infty$. For all $n$, there exists $c_n \in [x_n,x_n+\tau]$ if $\tau>0$ (or $[x_n+\tau,x_n]$ if $\tau<0$) such that
$$
|z(x_n+\tau) - z(x_n)| = |\dot{z}(c_n)||\tau| = |A(c_n)z(c_n)+B(c_n)||\tau|
$$
which converges to 0 since $c_n\to\infty$ and $z$ is bounded.\qed
\end{proof}
\noindent We then prove Prop.~\ref{prop5}.\\
\begin{proof}
There exists $\alpha>0$ such that $A(t),B(t) \in o(\frac{1}{t^{1+\alpha}})$ (we can choose the same $\alpha$ for both functions by taking the smallest).
In other words, for all $\epsilon>0$, there exists $T>0$ such that, for all $t\geq T$, we have $|A(t)|<\frac{\epsilon}{t^{1+\alpha}}$ and $|B(t)|<\frac{\epsilon}{t^{1+\alpha}}$.

Let $\epsilon>0$ and $T>0$ as above. For any $x<t<y$ larger than $T$ we have
\begin{eqnarray*}
|z(y)-z(x)| &=& \left|\int_x^y \dot{z}(t)dt\right| \\
&\leq& \int_x^y \left| \dot{z}(t) \right|dt = \int_x^y \left| A(t)z(t) + B(t) \right|dt \\
&\leq& \int_x^y \left(|A(t)||z(t)| + |B(t)| \right)dt\\
&\leq& \epsilon \int_x^y \frac{1}{t^{1+\alpha}}(|z(t)| + 1)dt\\
&\leq& \epsilon M  \int_x^y \frac{1}{t^{1+\alpha}}dt \leq \epsilon M  \int_1^\infty \frac{1}{t^{1+\alpha}}dt\\
&\leq& \epsilon M/\alpha,
\end{eqnarray*}
where $M>|z(t)| + 1$. The function $z$ is bounded and it has at least one accumulation point. Let $\ell,\ell'$ two accumulation points, i.e. there exist two sequences $(x_n)$, $(y_n)$ going to $+\infty$ such that $z(x_n)\to\ell$ et $z(y_n)\to\ell'$. For $n$ large enough, $x_n$ and $y_n$ are larger than $T$ and thus, for all $\epsilon>0$, there exists $N\in\mathbb{N}$ such that, $n\geq N$ implies that  $|z(y_n)-z(x_n)|\leq \epsilon M/\alpha$. In the limit $n\to +\infty$, we obtain that for all $\epsilon>0$ we have $|\ell'-\ell|\leq \epsilon M/\alpha$, which leads to $\ell = \ell'$.
Since $z$ has a unique accumulation point, $z$ is a convergent function.\qed
\end{proof}

\end{document}